\DeclareTextSymbolDefault{\textquotedbl}{T1}
\theoremstyle{plain}
\newtheorem{thm}{\protect\theoremname}
\theoremstyle{plain}
\newtheorem{lem}{\protect\lemmaname}
\theoremstyle{plain}
\newtheorem{prop}{\protect\propositionname}
\theoremstyle{plain}
\newtheorem{cor}{\protect\corollaryname}
\newtheorem{theorem}{Theorem}\newtheorem*{assumption}{Assumption}\newtheorem{claim}[theorem]{Claim}
\title{Monotone Comparative Statics in the Calvert-Wittman Model \thanks{%
The authors would like to thank Mar\'{\i}a Eugenia Boza, Michael Coppedge, Rafael DiTella, John Griffin, Robert Fishman, Ricardo Hausmann, B.J. Lee, Scott Mainwaring, Alfonso Miranda, Daniel Ortega, Benjamin Radcliff, Dani
Rodrik, John Roemer, Cameron Shelton, Jorge Vargas, Romain Wacziarg, Kathryn Vasilaky, the Associate Editor, an anonymous reviewer, and
participants at seminars at IESA, the University of Notre Dame, Cal Poly and
at conferences organized by the Public Choice Society, LACEA, the Game Theory
Society and the Econometric Society for their suggestions. \ Adolfo De Lima, Giancarlo Bravo and Reyes Rodr\'{\i}guez Acevedo
provided first-rate research assistance. \ All errors are our responsibility.%
}}
\author{Francisco Rodr\'{\i}guez\thanks{%
Fiscal Affairs Department, International Monetary Fund, 1900 Pennsylvania Ave NW, Washington, DC 20431. Email: frrodriguezc@gmail.com.}
 \and Eduardo Zambrano\thanks{%
Corresponding Author. Department of Economics, Orfalea College of Business,
Cal Poly, San Luis Obispo, CA 93407. Email: ezambran@calpoly.edu.}}
\date{December 2021}
\providecommand{\corollaryname}{Corollary}
\providecommand{\lemmaname}{Lemma}
\providecommand{\propositionname}{Proposition}
\providecommand{\theoremname}{Theorem}
\begin{document}
\maketitle
\begin{abstract}
In this paper, we show that when policy-motivated parties can commit to a particular platform during a uni-dimensional electoral contest
where valence issues do not arise there must be a positive association between the policies preferred by candidates and the policies adopted in expectation in the lowest and the highest equilibria of the electoral contest. We also show that this need not be so if the parties cannot commit to
a particular policy. The implication is that evidence of a negative relationship between enacted and preferred policies is suggestive of parties that hold positions from which they would like to move from yet are unable to do so. 

\end{abstract}

\textsc{Keywords}: Credibility and commitment, political competition.

\textsc{JEL Classification}: D72, D78

\thispagestyle{empty}

\clearpage
\section{Introduction}

The Downsian model of politics assumes that candidates can commit to keep their policy promises once they reach office. \ Their
ability to commit allows them to manipulate policy proposals so as to garner the fraction of votes that maximizes their probability of winning. Political
competition thus leads to convergence of proposed policies to the median voter's ideal point. \ A number of refinements
of this model have been proposed in the literature since Downs's 1957 contribution, many of which have attempted to reverse the problematic hypothesis of
complete convergence in policy proposals implied by Downsian competition.\protect\footnote{
Useful surveys include Mueller (2003), Hinich and Munger(1997), and Roemer (2001).} \ Until
the late nineties, most of this literature generally took as given the underlying assumption of a perfect capacity of candidates to make credible commitments.\protect\footnote{
See also Crain (2004), Chattopadhyay and Duflo (2004), Lee, Moretti and Butler (2004) and Groseclose (2001). For
a useful survey of the citizen-candidate model and its dynamic extensions see Duggan and Martinelli (2015).
}   

Besley and Coate (1997) and Osborne and Slivinsky (1996), however,
showed that some of the key results of the Downsian model no longer
hold in a model of citizen-candidates in which policymakers are not
bound to keep to their campaign promises. In particular, electoral
competition need not lead to full or even partial convergence in policy
platforms once candidates lose their ability to make credible promises.
\ Indeed, a multiplicity of equilibria become feasible, some of which
entail very extreme policies being proposed in equilibrium.

An extensive literature has developed in the past two decades addressing
the issue of how the citizen-candidate assumptions can be reconciled
with the intuition of the spatial competition model. These contributions
typically model repeated game interactions in which politicians who
deviate from their promises are punished in future elections and thus
gain an incentive to hold to their campaign promises. (Alesina, 1988;
Dixit, Grossman and Gul, 2000; Aragonès, Palfrey and Postlewaite,
2007; Panova, 2017). In some settings, politicians may decide to maintain
ambiguity about their preferences either because they do not know
the true preferences of the median voter (Glazer, 1990), wish to provide
a signal of their character or avoid reputational risks (Kartik and
McAfee, 2007; Kartik and van Weelden, 2019). Empirical tests of the
credibility hypothesis include comparisons of campaign promises and
legislative votes (Sulkin, 2009; Bidwell, Casey and Glennerster 2020),
assessments of the effect of term limits on observed policies (Besley
and Case, 1995, 2003; Ferraz and Finan, 2011) or testing for opportunistic
policy cycles (Alesina et al., 1997; Shi and Svennson, 2006). 

The intuition for our result is simple.  There are policy platforms that are so extreme that it makes no sense for a rational politician to adopt them.  This is because extreme positions can drive away so many voters to both make their proponents less likely to win an election and drive the probability-weighted policy further from their ideal point.  It follows that if we observe politicians adopting such platforms, it must reflect their inability to credibly commit to more moderate policy platforms.

To derive testable hypotheses from this intuition, we study the shape of the \emph{expected policy function}, which maps candidates' platforms into expected policies. We argue that candidates who can make credible commitments will never position themselves on the downwards-sloping segment of the expected policy function, where further moderation would lead expected policies closer to their ideal points. If we find candidates adopting platforms that fall in this segment, that is a good reason to conclude that they are constrained from further moderation by the inability to make credible promises.   This idea is conceptually like the notion that a profit maximizing monopolist would not produce in the decreasing region of its revenue function where reducing output would simultaneously increase its revenues and decrease its costs.

When politicians can make credible commitments, platforms are endogenous variables.  This makes it difficult to empirically evaluate hypotheses about the relationship between platforms and policies. To address this issue, we show that the \emph{equilibrium indirect expected policy function}, which maps candidate preferences into equilibrium expected policies, is also always increasing in the ideal policies of the candidates and can thus be used to investigate whether credibility problems can arise in practice, even in the presence of multiple equilibria. We illustrate how this result can be used to empirically evaluate credibility theories, for example by studying the correlation between changes in constituent or political leaders' preferences as measured by opinion surveys, and enacted policies.

The
rest of the paper is organized as follows. Section \ref{Gen} presents the main results of the paper in detail. Section
\ref{Con} concludes.

\section{\label{Gen}Setting}

The policy space is the interval $T=\left[0,1\right].$ Voters have
ideal policies represented by a point in $T$. When faced with two
policies to choose from, the voter chooses the policy that is closest
in distance to the voter's ideal policy.

Candidate preferences are described by a continuous real-valued payoff
function $u:T^{2}\rightarrow\mathbb{R}$: where, for each ideal policy
$t\in T,$ $u\left(x,t\right)$ is strictly concave in platform $x\in T$,
with $u\left(t,t\right)>u\left(x,t\right)$ for all $x\neq t$. There
are two candidates, $l$ and $r$ with ideal policies $0\leq t_{l}<t_{r}\leq1$
who respectively choose platforms $x_{l}$ and $x_{r}$.

Voters' ideal policies are distributed over the policy space $T$
according to a density which is unknown to the candidates. Because
of this uncertainty, the policy, $\mathbf{m},$ preferred by the median
voter is uncertain and the candidates form beliefs about \textbf{$\mathbf{m}$}
according to a continuous distribution $F$ with full support. Given
the profile of platforms $\left(x_{l},x_{r}\right)$ proposed by the
candidates, the probability of candidate $l$ winning the election
is given by: 
\[
P\left(x_{l},x_{r}\right)=\left\{ \begin{array}{ccc}
F\left(\frac{x_{l}+x_{r}}{2}\right) &  & \text{if \ensuremath{x_{l}<x_{r}}}\\
\frac{1}{2} &  & \text{if \ensuremath{x_{l}=x_{r}}}\\
1-F\left(\frac{x_{l}+x_{r}}{2}\right) &  & \text{if }x_{l}>x_{r}
\end{array}\right.
\]
with the probability of $r$ winning the election simply being $1-P\left(x_{l},x_{r}\right)$.

In what follows we sometimes make additional assumptions about the
preferences and beliefs of the candidates. We will make it explicit
when those additional assumptions are called for.

For $i=l,r$, let $U_{t_{i}}\left(x_{l},x_{r}\right)$ denote the
expected payoff function for candidate $i$ with ideal policy $t_{i}$,
that is, 
\[
U_{t_{i}}\left(x_{l},x_{r}\right)=P\left(x_{l},x_{r}\right)u\left(x_{l},t_{i}\right)+\left(1-P\left(x_{l},x_{r}\right)\right)u\left(x_{r},t_{i}\right).
\]

\begin{assumption}[Strict Single Crossing Property]
If $t\leq x<x^{\prime}<y<y^{\prime}\leq t^{\prime}$ we have that 
\[
U_{t}\left(x^{\prime},y\right)\geq U_{t}\left(x,y\right)\Rightarrow U_{t}\left(x^{\prime},y^{\prime}\right)>U_{t}\left(x,y^{\prime}\right)
\]
and

\[
U_{t^{\prime}}\left(x^{\prime},y\right)\geq U_{t^{\prime}}\left(x^{\prime},y^{\prime}\right)\Rightarrow U_{t^{\prime}}\left(x,y\right)>U_{t^{\prime}}\left(x,y^{\prime}\right).
\]
\end{assumption}

To motivate the Strict Single Crossing Property (SSCP) assumption, it helps to understand why candidate $i$ would want
to adopt a platform other than $t_{i}$. The answer is: in order to
decrease the chance that $i$'s opponent wins (which would force candidate
$i$ to endure an enacted policy that is far from $i$'s ideal policy,
$t_{i}$). According to SSCP, if it (weakly) pays for candidate $i$
to moderate their platform when the opponent's platform is `nearby',
it definitely pays for candidate $i$ to moderate their platform when
the opponent's platform is `far.' This is so because when the opponent's
platform is `far', it is more painful for candidate $i$ to lose the
election.

\begin{assumption}[Strict Log Supermodularity]
For every $t,t^{\prime},x,x^{\prime},y$$\in T$ with $t<t^{\prime}\leq x<x^{\prime}<y$
or $y<x<x^{\prime}\leq t<t^{\prime}$

\[
\frac{u\left(x^{\prime},t^{\prime}\right)-u\left(y,t^{\prime}\right)}{u\left(x,t^{\prime}\right)-u\left(y,t^{\prime}\right)}>\frac{u\left(x^{\prime},t\right)-u\left(y,t\right)}{u\left(x,t\right)-u\left(y,t\right)}.
\]
\end{assumption}
The Strict Log Supermodularity (SLS) assumption pertains the strict log supermodularity in $\left(x,t\right)$
of the payoff difference function, $u\left(x,t\right)-u\left(y,t\right)$
over the set of platforms uniformly to the left, or uniformly to the
right, of the platform chosen by the opponent. This says that the
relative change in the difference in payoff between winning and losing
for a candidate that follows a certain increase in their platform
is greater when the candidate's ideal policy is high than when the
candidate's ideal policy is low. Examples of functions $u$ that satisfy
SLS include commonly used functions in the literature such as the
\emph{quadratic}, $u\left(x,t\right)=-\left(x-t\right)^{2},$ the
\emph{exponential} $u\left(x,t\right)=-e^{\left(x-t\right)}+x$, and
their positive, affine transformations. See, e.g., Duggan and Martinelli
(2017).\footnote{For a different example of an application of log supermodularity in
models of politics see Ashworth and Bueno de Mesquita (2006).}

In what follows, these assumptions will be employed as in the literature
on supermodular games: Assumption SSCP will be used to show that the best responses of each candidate are increasing in the platform
chosen by their opponent, to show that the set of Nash equilibria
is non-empty, and to show that this set has a smallest and a largest
element. Assumption SLS in turn will be used to show that the best responses of each candidate are increasing in their respective ideal
policies. Together, both assumptions, and the general structure of
our model, imply that the lowest and highest equilibria are increasing
in the ideal policies of the candidates, and that therefore the equilibrium indirect expected policy functions associated with the lowest and highest equilibria are  increasing in these ideal policies as
well. The reader interested in learning more about these techniques
work can consult Amir (2005).

\subsection{A model with commitment}

In this model, as in Calvert (1985) and Wittman (1977), candidate
$l$ sets their platform $x_{l}$ to solve 
\[
\max_{x_{l}}U_{t_{l}}\left(x_{l},x_{r}\right),
\]
taking $x_{r}$ as given.

Candidate $r$ sets their platform $x_{r}$ to solve

\[
\max_{x_{r}}U_{t_{r}}\left(x_{l},x_{r}\right),
\]

taking $x_{l}$ as given.

%We assume throughout that if two platforms give the same expected
%payoff for a candidate, given the platform chosen by the opponent,
%then the candidate will choose the one that is closest to their ideal
%policy.

In what follows we investigate the characteristics of the Nash equilibria
of the game described above.

\subsubsection{The best responses of the candidates and their properties}

Let $\varphi_{i}:T\rightrightarrows T$ be the best response correspondence
for candidate $i$ $=l,r$. 
\begin{lem}
\label{G1}The best response correspondence \textup{$\varphi_{i}$}
for candidate i with ideal policy $t_{i}$ and platform, $x$, chosen
by i's opponent has the following properties: 
\[
\left\{ \begin{array}{ccc}
\varphi_{t_{i}}\left(x\right) \subset (x, t_{i}] &  & \text{if \ensuremath{x<t_{i}}}\\
\varphi_{t_{i}}\left(x\right)=\{t_{i}\} &  & \text{if \ensuremath{x=t_{i}}}\\
\varphi_{t_{i}}\left(x\right) \subset [t_{i}, x)  &  & \text{if }x>t_{i}
\end{array}\right.
\]
\end{lem}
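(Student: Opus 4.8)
The plan is to fix the opponent's platform $x$ and analyze candidate $i$'s objective as a function of their own platform $x_i$. Writing $q\left(x_i\right)$ for the probability that candidate $i$ wins with platform $x_i$ against $x$, the expected payoff can be written as
\[
U_{t_i}=u\left(x,t_i\right)+q\left(x_i\right)\left[u\left(x_i,t_i\right)-u\left(x,t_i\right)\right],
\]
so that, with $x$ held fixed, the term $u\left(x,t_i\right)$ is constant and maximizing the payoff is equivalent to maximizing $q\left(x_i\right)\left[u\left(x_i,t_i\right)-u\left(x,t_i\right)\right]$. This decomposition holds for either candidate, the only difference being the direction in which $q$ moves with $x_i$; since the two candidates' problems are mirror images of one another, I would carry out the argument for candidate $l$ and recover the statement for $r$ by symmetry. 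Note that the argument will use only the strict concavity of $u\left(\cdot,t_i\right)$ and the strict monotonicity of $F$ (full support), not the SSCP or SLS assumptions. Existence of a maximizer is not a concern: although $q$ jumps at $x_i=x$, the two policy outcomes coincide there, so $U_{t_i}$ is continuous in $x_i$ on the compact set $T$.

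The case $x=t_i$ is immediate, since playing $x_i=t_i$ secures the globally maximal own-policy payoff $u\left(t_i,t_i\right)$ regardless of the electoral outcome, whereas any $x_i\neq t_i$ produces a convex combination of $u\left(x_i,t_i\right)<u\left(t_i,t_i\right)$ and $u\left(t_i,t_i\right)$ with strictly positive weight on the smaller term, because full support forces $q\left(x_i\right)\in\left(0,1\right)$ once $x_i\neq x$. For the main case $x>t_i$ I would prove $\varphi_{t_i}\left(x\right)\subset\left[t_i,x\right)$ by two exclusions. First, every $x_i<t_i$ is strictly dominated by $x_i=t_i$: raising the platform to $t_i$ lifts the own-policy payoff to its maximum and, as the cutoff $\left(x_i+x\right)/2$ rises against a strictly increasing $F$, also raises the winning probability, and plugging both effects into the decomposition (distinguishing the two possible signs of $u\left(x_i,t_i\right)-u\left(x,t_i\right)$) yields a strict improvement. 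Second, every $x_i\geq x$ is excluded straight from the decomposition: when $x_i>x$ the bracket $u\left(x_i,t_i\right)-u\left(x,t_i\right)$ is strictly negative while $q\left(x_i\right)>0$, so the payoff drops below $u\left(x,t_i\right)$; and the tie $x_i=x$, which attains exactly $u\left(x,t_i\right)$, is itself beaten by any $x_i\in\left[t_i,x\right)$, where $q\left(x_i\right)>0$ and the bracket is strictly positive.

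The point that needs the most care is the behavior at and near $x_i=x$, where $q$ is discontinuous and non-monotone, so no first-order condition is available; I would accordingly dispose of $x_i=x$ and $x_i>x$ by the direct payoff comparisons above rather than by calculus, and I would lean on full support of $F$ and strict concavity of $u\left(\cdot,t_i\right)$ to turn each weak inequality into the strict inequality required for the open endpoint at $x$ and for ruling out the side of $t_i$ away from the opponent. The remaining case $x<t_i$ is the mirror image of $x>t_i$: the same decomposition applies, the relevant winning-probability branch is now $1-F\left(\left(x_i+x\right)/2\right)$, and the same two-step domination argument delivers $\varphi_{t_i}\left(x\right)\subset\left(x,t_i\right]$. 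Finally, the reflection $s\mapsto1-s$ of the policy space transfers all three cases to candidate $r$, completing the proof.
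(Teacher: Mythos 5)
Your proposal is correct and takes essentially the same route as the paper's own proof: the same reduction of the objective to a constant plus $q\left(x_{i}\right)\left[u\left(x_{i},t_{i}\right)-u\left(x,t_{i}\right)\right]$, the same exclusion of platforms at or beyond the opponent's position (strictly negative product, with the tie value in turn beaten on $\left[t_{i},x\right)$), and the same domination of platforms on the far side of $t_{i}$ by $t_{i}$ itself using monotonicity of $F$ and of $u\left(\cdot,t_{i}\right)$. Your two additions, namely verifying existence of a maximizer by noting that the vanishing bracket cancels the jump in $q$, and transferring the result to candidate $r$ via the reflection $s\mapsto1-s$ rather than ``similar and omitted,'' are sound refinements of steps the paper leaves implicit.
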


All proofs are in the Online Appendix.

The interpretation is that candidate $i$'s best responses are always
`sandwiched' between $t_{i}$ and the platform chosen by $i$'s opponent,
$x$. 

Let $\overline{\varphi}_{t_{l}}\left(x\right)$ and $\underline{\varphi}_{t_{l}}\left(x\right)$ be, respectively, the largest and smallest elements of $\varphi_{t_{l}}\left(x\right)$.
\begin{prop}
\label{G4}Assume that SSCP holds. Then if $t_{l}\leq x_{r}<x_{r}^{\prime}\leq t_{r}$,
then we have that $\underline{\varphi}_{t_{l}}\left(x_{r}^{\prime}\right)\geq\overline{\varphi}_{t_{l}}\left(x_{r}\right),$
and if $t_{r}\geq x_{l}^{\prime}>x_{l}\geq t_{l}$, then we have that $\underline{\varphi}_{t_{r}}\left(x_{l}^{\prime}\right)\geq\overline{\varphi}_{t_{r}}\left(x_{l}\right).$ 
\end{prop}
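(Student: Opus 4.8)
The plan is to prove both inequalities by contradiction, using Lemma \ref{G1} to secure the strict orderings that the single-crossing implications in SSCP require, and then using SSCP to upgrade a weak revealed-preference inequality into a strict one that collides with optimality. Since each $\overline{\varphi}$ and $\underline{\varphi}$ is a maximizer of the relevant expected-payoff function over all of $T$, the only delicate point is arranging the chain $t \leq x < x' < y < y'$ that SSCP presupposes; Lemma \ref{G1} does exactly this by pinning each best response strictly between the candidate's ideal point and the opponent's platform.

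For the first claim, fix $t_l \leq x_r < x_r'$ and set $a = \overline{\varphi}_{t_l}(x_r)$, $b = \underline{\varphi}_{t_l}(x_r')$. If $x_r = t_l$ then $a = t_l$ by Lemma \ref{G1} and $b \geq t_l = a$ is immediate, so assume $x_r > t_l$, which forces $a < x_r$. Suppose for contradiction that $b < a$. Then Lemma \ref{G1} gives $t_l \leq b < a < x_r < x_r'$, the ordering SSCP needs. Optimality of $a$ against $x_r$ gives $U_{t_l}(a, x_r) \geq U_{t_l}(b, x_r)$, and optimality of $b$ against $x_r'$ gives $U_{t_l}(b, x_r') \geq U_{t_l}(a, x_r')$. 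The first SSCP implication, applied with $x = b$, $x' = a$, $y = x_r$, $y' = x_r'$, converts the former inequality into $U_{t_l}(a, x_r') > U_{t_l}(b, x_r')$, contradicting the latter. Hence $b \geq a$.

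The second claim is the symmetric argument driven by the second SSCP implication. Fix $t_l \leq x_l < x_l' \leq t_r$ and set $c = \overline{\varphi}_{t_r}(x_l)$, $d = \underline{\varphi}_{t_r}(x_l')$. If $x_l' = t_r$ then $d = t_r \geq c$ by Lemma \ref{G1}, so assume $x_l' < t_r$, which forces $d > x_l'$. Suppose for contradiction that $d < c$. Then $t_l \leq x_l < x_l' < d < c \leq t_r$. Optimality of $d$ against $x_l'$ gives $U_{t_r}(x_l', d) \geq U_{t_r}(x_l', c)$, and optimality of $c$ against $x_l$ gives $U_{t_r}(x_l, c) \geq U_{t_r}(x_l, d)$. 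The second SSCP implication, applied with $x = x_l$, $x' = x_l'$, $y = d$, $y' = c$, turns the former into $U_{t_r}(x_l, d) > U_{t_r}(x_l, c)$, contradicting the latter. Hence $d \geq c$.

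The step I expect to require the most care is the bookkeeping that places the four platforms in the strict order SSCP demands, since the implication is vacuous if any of the inequalities $x < x' < y < y'$ degenerates. This is where Lemma \ref{G1} is indispensable — it is precisely what rules out a best response coinciding with the opponent's platform, which would collapse $x' < y$ — and where the two boundary configurations $x_r = t_l$ and $x_l' = t_r$ must be peeled off and checked directly. Once the ordering is in hand, the contradiction follows mechanically from a single application of the appropriate SSCP implication.
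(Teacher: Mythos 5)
Your proof is correct and follows essentially the same route as the paper's: both arguments peel off the boundary cases ($x_r = t_l$ and $x_l' = t_r$) via Lemma \ref{G1}, then suppose the conclusion fails, use Lemma \ref{G1} to assemble the strict chain $t_l \leq b < a < x_r < x_r' \leq t_r$ (resp. $t_l \leq x_l < x_l' < d < c \leq t_r$), and apply the appropriate SSCP implication to the revealed-preference inequality at the nearer opponent platform to contradict optimality at the farther one. The only differences are notational.
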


The implication is that every selection of the best response correspondence of each candidate
is non-decreasing in the platform of their opponent over the set
of policies in $\left[t_{l},t_{r}\right].$

\subsubsection{The Nash equilibria of the game and their properties}
\begin{prop}
\label{G4-1}Assume that SSCP holds. Then the set $E$
of Nash equilibria is non-empty and it has (coordinatewise) largest
and smallest elements $\left(\overline{x}_{l}^{*},\overline{x}_{r}^{*}\right)$
and $\left(\underline{x}_{l}^{*},\underline{x}_{r}^{*}\right)$. 
\end{prop}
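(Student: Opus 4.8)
The plan is to treat the commitment game, restricted to the square $L=[t_l,t_r]^2$, as a game of strategic complements and to apply the Knaster--Tarski fixed-point theorem to the \emph{extremal} best-response maps. Define $b:L\to L$ and $\underline{b}:L\to L$ by $b(x_l,x_r)=\bigl(\overline{\varphi}_{t_l}(x_r),\overline{\varphi}_{t_r}(x_l)\bigr)$ and $\underline{b}(x_l,x_r)=\bigl(\underline{\varphi}_{t_l}(x_r),\underline{\varphi}_{t_r}(x_l)\bigr)$, where the extremal selections are well-defined by Lemma \ref{G1} and the accompanying characterization of best responses. Since the largest (resp.\ smallest) element of a best-response set is itself a best response, a profile is a fixed point of $b$ (or of $\underline{b}$) if and only if each coordinate is a best response to the other, i.e.\ precisely when it is a Nash equilibrium. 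So every fixed point of $b$ or $\underline{b}$ lies in $E$.

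First I would pin down the relevant domain using Lemma \ref{G1}. If $x_r<t_l$ at an equilibrium, then $x_l\in\varphi_{t_l}(x_r)\subset(x_r,t_l]$ forces $x_l>x_r$ and $x_l\le t_l<t_r$, so $x_r\in\varphi_{t_r}(x_l)\subset(x_l,t_r]$ forces $x_r>x_l$, a contradiction; the symmetric argument rules out $x_l>t_r$, so every Nash equilibrium satisfies $t_l\le x_l\le x_r\le t_r$ and hence lies in $L$. The same lemma shows $b$ and $\underline{b}$ map $L$ into itself, because for $x_r\in[t_l,t_r]$ every element of $\varphi_{t_l}(x_r)$ lies in $[t_l,t_r]$, and symmetrically for $\varphi_{t_r}$. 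As a compact product of intervals, $L$ is a complete lattice under the coordinatewise order.

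Next I would establish monotonicity. Proposition \ref{G4} gives, for $t_l\le x_r<x_r'\le t_r$, the inequality $\underline{\varphi}_{t_l}(x_r')\ge\overline{\varphi}_{t_l}(x_r)$, from which both $\overline{\varphi}_{t_l}$ and $\underline{\varphi}_{t_l}$ are nondecreasing on $[t_l,t_r]$; the analogous statement holds for candidate $r$. Hence $b$ and $\underline{b}$ are order-preserving self-maps of the complete lattice $L$, and Tarski's theorem applies: $b$ has a greatest fixed point $\overline{x}^*=(\overline{x}_l^*,\overline{x}_r^*)$ and $\underline{b}$ has a least fixed point $\underline{x}^*=(\underline{x}_l^*,\underline{x}_r^*)$, both Nash equilibria, so $E\neq\emptyset$.

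Finally I would obtain extremality from Tarski's description of the extreme fixed points as $\overline{x}^*=\sup\{x\in L:x\le b(x)\}$ and $\underline{x}^*=\inf\{x\in L:\underline{b}(x)\le x\}$. For any Nash equilibrium $e=(x_l,x_r)$ each coordinate is a best response, hence no larger than the largest best response, giving $e\le b(e)$, so $e\le\overline{x}^*$; symmetrically $\underline{b}(e)\le e$ yields $e\ge\underline{x}^*$. Letting $e$ range over $E$ shows $\overline{x}^*$ and $\underline{x}^*$ are the coordinatewise largest and smallest Nash equilibria. I expect the main obstacle to be the groundwork rather than the fixed-point step: one must (i) justify that the extremal best-response selections exist despite the discontinuity of $P$ along the diagonal $x_l=x_r$, and (ii) confine the analysis to $L$, since Proposition \ref{G4} supplies monotonicity only there. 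Working with the single-valued extremal maps and the classical Tarski theorem is exactly what lets me avoid the stronger set-order hypotheses (and the correspondence-valued version of the theorem) that a direct treatment of the full best-response correspondence would otherwise demand.
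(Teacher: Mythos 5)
Your proposal is correct and follows essentially the same route as the paper: both apply Tarski's fixed point theorem to the extremal best-response selections $\left(x_{l},x_{r}\right)\mapsto\left(\overline{\varphi}_{t_{l}}\left(x_{r}\right),\overline{\varphi}_{t_{r}}\left(x_{l}\right)\right)$ (and its dual) viewed as monotone self-maps of $\left[t_{l},t_{r}\right]^{2}$, with the domain restriction coming from Lemma \ref{G1}, monotonicity from Proposition \ref{G4}, and extremality from the fact that every Nash equilibrium $e$ satisfies $e\leq b\left(e\right)$ (resp.\ $\underline{b}\left(e\right)\leq e$) together with Tarski's characterization of the greatest (resp.\ least) fixed point. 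Your write-up is somewhat more explicit than the paper's on the confinement of all equilibria to $\left[t_{l},t_{r}\right]^{2}$ and on why fixed points of the extremal maps are themselves equilibria, but the underlying argument is the same.
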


\begin{lem}
\label{G2}In every equilibrium $(x_{l}^{\ast},x_{r}^{\ast})$, $t_{l}\leq x_{l}^{\ast}<x_{r}^{\ast}\leq t_{r}.$ 
\end{lem}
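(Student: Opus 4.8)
The plan is to derive both inequalities $t_l \le x_l^{*} < x_r^{*} \le t_r$ purely from the sandwiching property of best responses in Lemma \ref{G1}, combined with the equilibrium conditions $x_l^{*} \in \varphi_{t_l}(x_r^{*})$ and $x_r^{*} \in \varphi_{t_r}(x_l^{*})$. The one fact I would extract from Lemma \ref{G1} is the following trichotomy: any $b \in \varphi_{t_i}(x)$ lies weakly on the $t_i$ side of $x$ and never equals $x$ unless $x = t_i$; concretely $b \in (x, t_i]$ when $x < t_i$, $b = t_i$ when $x = t_i$, and $b \in [t_i, x)$ when $x > t_i$. Every step below is a direct application of this trichotomy to one of the two equilibrium inclusions.

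First I would establish the strict ordering $x_l^{*} < x_r^{*}$ by ruling out $x_l^{*} \ge x_r^{*}$. Suppose $x_l^{*} \ge x_r^{*}$ and apply the trichotomy to $x_l^{*} \in \varphi_{t_l}(x_r^{*})$. The subcase $x_r^{*} > t_l$ would put $x_l^{*} \in [t_l, x_r^{*})$, i.e. $x_l^{*} < x_r^{*}$, contradicting the supposition; so the only survivors are $x_r^{*} < t_l$ (whence $x_r^{*} < x_l^{*} \le t_l$) and $x_r^{*} = t_l = x_l^{*}$. In both survivors $x_l^{*} \le t_l < t_r$, so $x_l^{*}$ lies strictly left of $t_r$. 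Applying the trichotomy to $x_r^{*} \in \varphi_{t_r}(x_l^{*})$ with $x_l^{*} < t_r$ then forces $x_r^{*} \in (x_l^{*}, t_r]$, so $x_r^{*} > x_l^{*}$, contradicting $x_l^{*} \ge x_r^{*}$. Hence $x_l^{*} < x_r^{*}$.

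With the ordering in hand, the outer bounds drop out. Since $x_r^{*} > x_l^{*}$, the inclusion $x_r^{*} \in \varphi_{t_r}(x_l^{*})$ cannot be realized through $[t_r, x_l^{*})$ or $\{t_r\}$ (both of which would give $x_r^{*} \le x_l^{*}$), so we must be in the case $x_l^{*} < t_r$ with $x_r^{*} \in (x_l^{*}, t_r]$, yielding $x_r^{*} \le t_r$. Symmetrically, since $x_l^{*} < x_r^{*}$, the inclusion $x_l^{*} \in \varphi_{t_l}(x_r^{*})$ cannot be realized through $(x_r^{*}, t_l]$ or $\{t_l\}$ (both of which would give $x_l^{*} \ge x_r^{*}$), so we are in the case $x_r^{*} > t_l$ with $x_l^{*} \in [t_l, x_r^{*})$, yielding $x_l^{*} \ge t_l$. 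Combining the three conclusions gives $t_l \le x_l^{*} < x_r^{*} \le t_r$.

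The only step with real content is ruling out $x_l^{*} \ge x_r^{*}$, since it is the sole place where I must chain the two candidates' best-response restrictions together rather than read a bound off a single inclusion; the delicate points there are the boundary subcase $x_r^{*} = t_l$ and confirming that the equal-platform configuration is excluded (note $x_l^{*} = x_r^{*} = z$ would force $z = t_l$ via $l$'s response and $z = t_r$ via $r$'s response, impossible since $t_l < t_r$). Everything after the ordering is established is a direct transcription of the trichotomy in Lemma \ref{G1}.
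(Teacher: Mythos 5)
Your proof is correct and takes essentially the same approach as the paper's: both arguments rely solely on the trichotomy of Lemma \ref{G1} and chain the two equilibrium inclusions $x_{l}^{*}\in\varphi_{t_{l}}(x_{r}^{*})$, $x_{r}^{*}\in\varphi_{t_{r}}(x_{l}^{*})$ to eliminate impossible configurations. The only difference is the order of the sub-claims --- you establish $x_{l}^{*}<x_{r}^{*}$ first and then read off the outer bounds, whereas the paper first rules out platforms outside $[t_{l},t_{r}]$ and then deduces the ordering --- and your explicit treatment of the boundary case $x_{r}^{*}=t_{l}$ is, if anything, more careful than the paper's.
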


This is the usual `partial convergence' result one obtains in the
Calvert-Wittman model. See, e.g, Roemer (1997), section 4.

\subsubsection{Equilibrium Comparative Statics}
\begin{thm}
\label{inc}Assume that SSCP and SLS hold. Let $t_{l}<t_{l}^{\prime}<t_{r}<t_{r}^{\prime}$. Then
\begin{itemize}
    \item $\overline{x}_{l}^{*}\left(t_{l}^{\prime},t_{r}\right)\geq \overline{x}_{l}^{*}\left(t_{l},t_{r}\right)$ and $\overline{x}_{r}^{*}\left(t_{l},t_{r}^{\prime}\right)\geq \overline{x}_{r}^{*}\left(t_{l},t_{r}\right)$
    \item $\underline{x}_{l}^{*}\left(t_{l}^{\prime},t_{r}\right)\geq \underline{x}_{l}^{*}\left(t_{l},t_{r}\right)$ and $\underline{x}_{r}^{*}\left(t_{l},t_{r}^{\prime}\right)\geq \underline{x}_{r}^{*}\left(t_{l},t_{r}\right)$
\end{itemize}

\end{thm}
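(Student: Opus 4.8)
The plan is to read Theorem \ref{inc} as a monotone comparative statics statement for the extremal equilibria of a game of strategic complements, with the ideal policies $t_l$ and $t_r$ playing the role of parameters. Three ingredients are already in place. Proposition \ref{G4-1} supplies a largest and a smallest equilibrium for each parameter configuration; Lemma \ref{G2} confines every equilibrium to the sublattice $[t_l,t_r]^2$; and Proposition \ref{G4} (the consequence of SSCP) says that, on this sublattice, each candidate's best response is nondecreasing in the opponent's platform, so the joint best-response correspondence is isotone in the product order. The only missing ingredient is that each candidate's best response is nondecreasing in their \emph{own} ideal policy, which is exactly where SLS will be used. Once that is in hand, I would invoke the standard Tarski/Topkis argument (as in Milgrom and Roberts (1990) or Amir (2005)): the extremal fixed points of an isotone correspondence that shifts up when a parameter increases are themselves nondecreasing in that parameter.

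The heart of the argument is therefore the following claim: fixing the opponent's platform, each candidate's best response is nondecreasing in their own ideal policy. I would prove this by verifying that $U_{t_i}$ has the Milgrom--Shannon single-crossing property in (own platform; own ideal policy) and then appealing to the Milgrom--Shannon monotonicity theorem (the action is one-dimensional, so quasisupermodularity is automatic). For candidate $l$ facing a fixed $x_r$, write
\[
U_{t}(x,x_r)=u(x_r,t)+F\!\left(\frac{x+x_r}{2}\right)\bigl[u(x,t)-u(x_r,t)\bigr],
\]
so that for $t_l\le x<x'<x_r$ the increment $U_t(x',x_r)-U_t(x,x_r)$ is nonnegative if and only if the payoff-difference ratio $\frac{u(x',t)-u(x_r,t)}{u(x,t)-u(x_r,t)}$ is at least the probability ratio $F(\frac{x+x_r}{2})/F(\frac{x'+x_r}{2})$. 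The probability ratio does not depend on $t$, while SLS (with $y=x_r$) says precisely that the payoff-difference ratio is strictly increasing in $t$; hence if the increment is nonnegative at $t$ it is strictly positive at any $t'>t$, which is strict single crossing. The symmetric computation for candidate $r$, writing the probability that $r$ wins as $1-F(\frac{x_l+x_r}{2})$ and invoking the other case of SLS, delivers the same conclusion. To pass from single crossing to a statement about the argmax as the type moves, I would combine Milgrom--Shannon with the sandwich property of Lemma \ref{G1}, which keeps each best response between the own ideal and the opponent's platform and so absorbs the fact that the feasible interval itself shifts with the ideal policy.

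With the two monotonicities established, the assembly is routine. Fix $t_r$ and raise $t_l$ to $t_l'$: the joint best-response correspondence on $[t_l,t_r]^2$ is isotone in platforms and shifts up (in the strong set order) under the parameter increase, so iterating the largest selection downward from the top of the lattice, and the smallest selection upward from the bottom, and comparing the two parameter values term by term, yields $\overline{x}^{*}(t_l',t_r)\ge\overline{x}^{*}(t_l,t_r)$ and $\underline{x}^{*}(t_l',t_r)\ge\underline{x}^{*}(t_l,t_r)$ coordinatewise. Reading off the $l$-coordinate gives the first inequality in each bullet; the symmetric exercise, raising $t_r$ to $t_r'$, gives the $r$-coordinate inequalities. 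In fact the argument delivers coordinatewise monotonicity of the entire equilibrium vector, of which the stated own-platform inequalities are a special case.

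The step I expect to be the main obstacle is the single-crossing claim, because the winning probability $F(\frac{x+x_r}{2})$ depends on the very platform being chosen, so $U_{t_i}$ is not a simple product of a type-dependent and a type-free factor and one cannot read off increasing differences directly. The resolution is that SLS is calibrated exactly so that the type enters only through a payoff-difference ratio that it renders monotone, while the platform-dependent probabilities enter through a ratio that is free of the type; care is also needed because the feasible interval for the best response moves with the ideal policy (handled by Lemma \ref{G1}) and because the isotonicity of best responses, and hence the comparative statics machinery, is only guaranteed on the sublattice $[t_l,t_r]^2$ to which Lemma \ref{G2} confines all equilibria.
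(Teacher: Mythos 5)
Your proposal is correct and follows essentially the same route as the paper: the single-crossing claim you verify (rewriting the payoff increment as a comparison between a type-dependent payoff-difference ratio and a type-free probability ratio, then applying SLS with $y$ equal to the opponent's platform) is precisely the paper's auxiliary claim used in its proof, the case analysis absorbing the shift of the feasible interval via Lemma \ref{G1} matches the paper's argument, and the final assembly via Tarski-style monotonicity of extremal fixed points of an isotone, upward-shifted best-response map is exactly how the paper concludes. The only difference is presentational: you invoke the Milgrom--Shannon theorem and the standard comparative-statics machinery by name, where the paper carries out the same contradiction argument and the $\sup$-characterization of the largest fixed point by hand.
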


To show this result we first establish that the best responses of each candidate 
are increasing in $t_{l}$ and $t_{r}$. This is illustrated in Figure
\ref{fig:Slide4.png}, which is drawn in  $\left[t_{l},t_{r}\right]\times \left[t_{l},t_{r}\right]$ space under the assumption that $\left(t_{l}^{\prime},t_{r}^{\prime}\right)>\left(t_{l},t_{r}\right),$ and with the best response correspondences being single-valued. The dashed blue line represents $\varphi_{t_{l}^{'}}$, and it is to
the right of the solid blue line, which represents $\varphi_{t_{l}}$.
The dashed gray line represents $\varphi_{t_{r}^{'}}$
and is above the solid gray line, which represents $\varphi_{t_{r}}$. Figure \ref{fig:Slide4.png} also illustrates the content
of Theorem \ref{inc}: the smallest equilibria of the model parametrized
by $\left(t_{l},t_{r}\right)$ is smaller than the smallest equilibria
of the model parametrized by $\left(t_{l}^{\prime},t_{r}^{\prime}\right).$
Similarly for the largest equilibria of the models. Figure \ref{fig:Slide4.png}
makes it clear that comparison of the rest of the equilibria may not
even be meaningful, since the model parametrized by $\left(t_{l},t_{r}\right)$
has an ``intermediate''\ equilibrium but the model parametrized
by $\left(t_{l}^{\prime},t_{r}^{\prime}\right)$ does not.

\begin{figure}
\centering \includegraphics[scale=0.35]{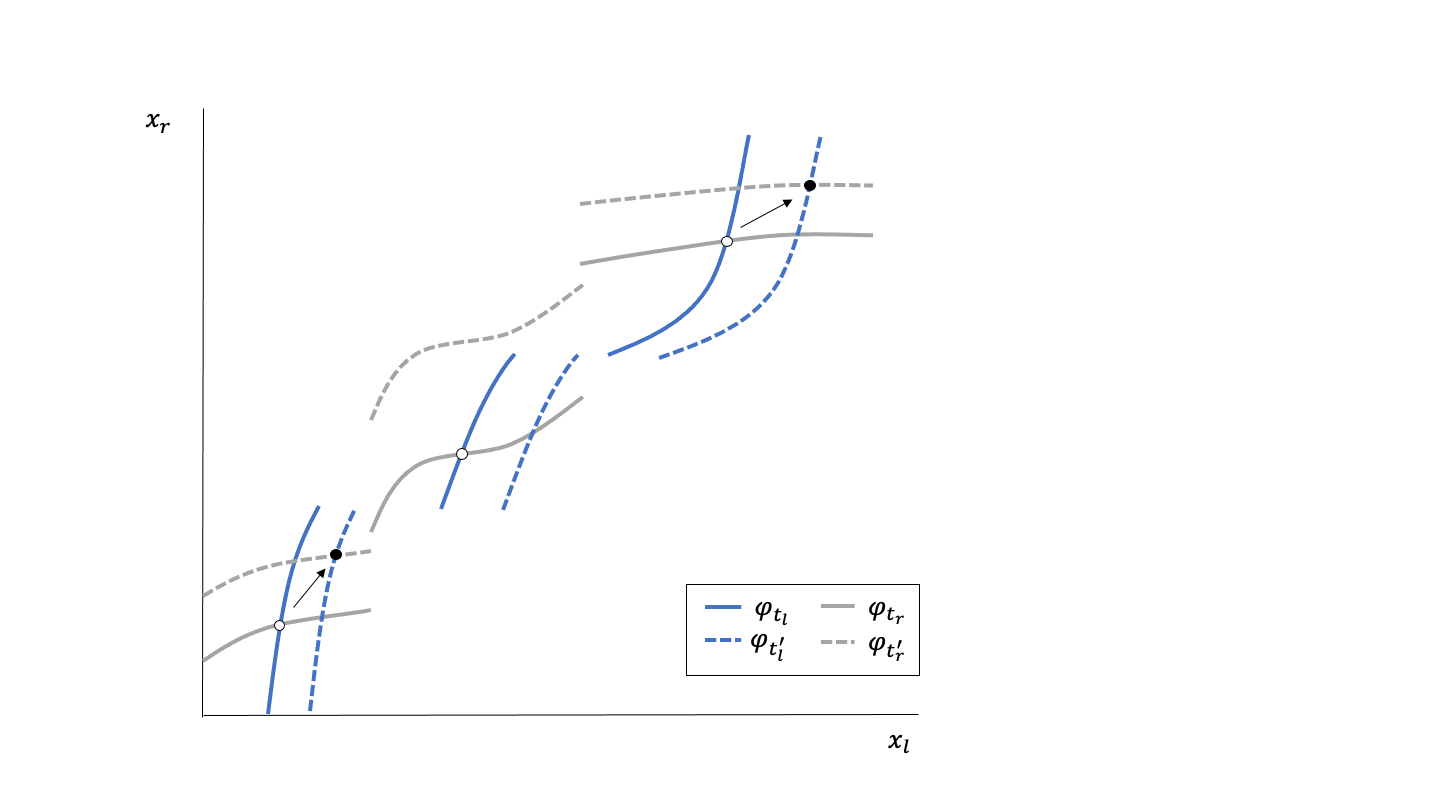} \caption{Multiple equilibria comparative statics}
\label{fig:Slide4.png} 
\end{figure}

Consider now the expected policy function, 
\[
\pi\left(x_{l},x_{r}\right):=P\left(x_{l},x_{r}\right)x_{l}+\left(1-P\left(x_{l},x_{r}\right)\right)x_{r}.
\]

The expected policy function estimates, before the resolution of uncertainty
about the electoral outcome, the platform that will ultimately be
adopted as policy. The left panel of Figure \ref{fig:Slide4.png-1}
plots the expected policy as a function of the platform, $x_{i},$
chosen by candidate $i$, given the platform, $x$, chosen by $i$'s
opponent. When $x_{i}=0$, $i$'s platform is too extreme to entail
a substantial probability of the candidate winning the election, and
the expected policy is therefore close to the platform chosen by $i$'s
opponent, $x$. As candidate $i$ moderates their platform, starting
from zero, $i$'s probability of winning increases, and the expected
policy therefore moves away from $x$. Eventually, as $x_{i}$ gets
close to $x$, so does the expected policy. Similarly, if $x_{i}=1$
and this platform is too extreme to entail any substantial probability
of candidate $i$ winning the election, the expected policy is close
to the platform chosen by $i$'s opponent, $x$. As candidate $i$
moderates their platform, starting from one, $i$'s probability of
winning grows, and the expected policy then begins to move away from
$x$. 
\begin{figure}
\centering

\includegraphics[scale=0.35]{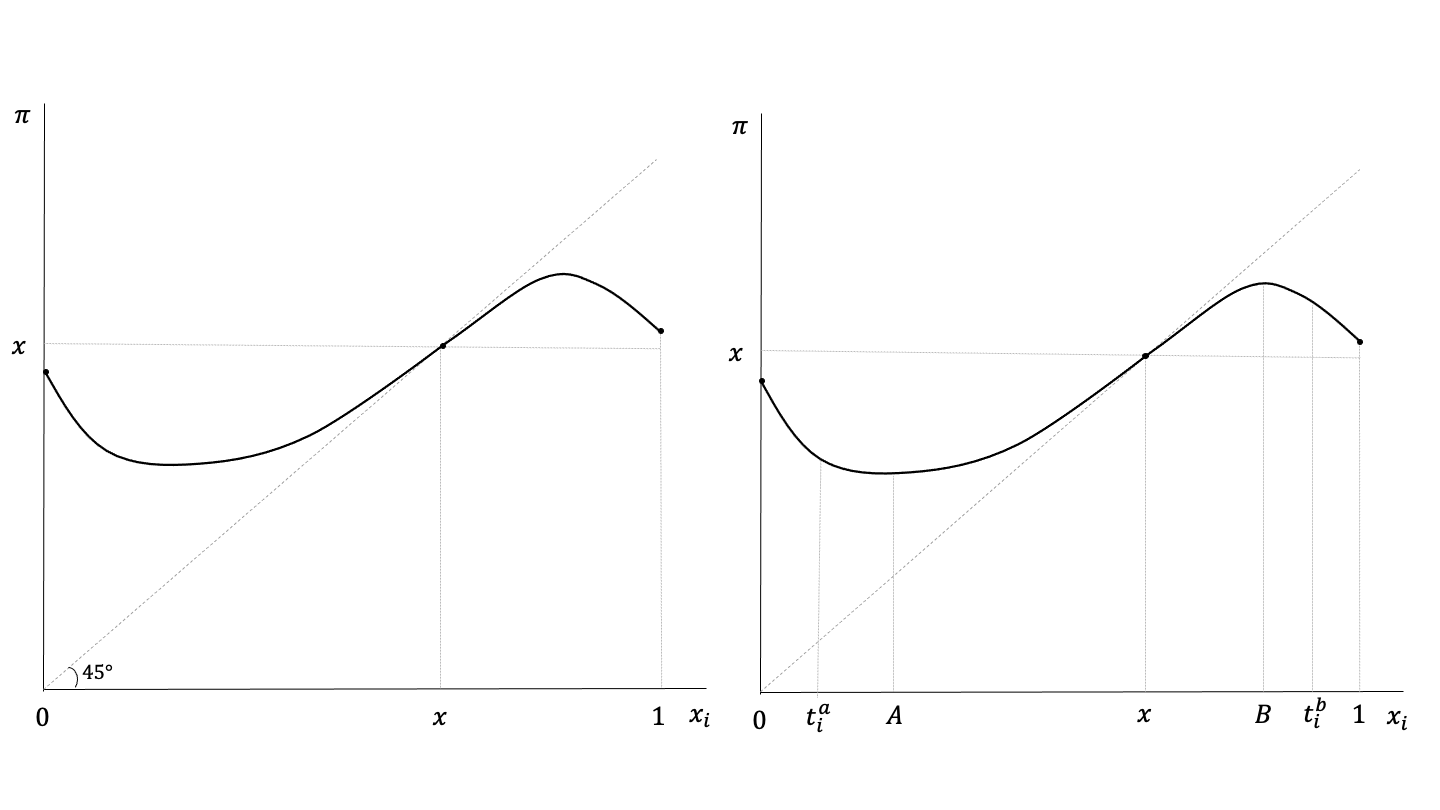} \caption{The Expected Policy Function}
\label{fig:Slide4.png-1} 
\end{figure}

\begin{thm}
\label{epo}Let $x_{r}>t_{l}$, \textup{$x_{l}\in\varphi_{t_{l}}\left(x_{r}\right)$
and $x_{l}^{\prime}>x_{l}$.} Then $\pi\left(x_{l}^{\prime},x_{r}\right)>\pi\left(x_{l},x_{r}\right).$
Let $x_{l}<t_{r}$, \textup{$x_{r}\in\varphi_{t_{r}}\left(x_{l}\right)$
and} \textup{$x_{r}^{\prime}<x_{r}$}. Then $\pi\left(x_{l},x_{r}^{\prime}\right)<\pi\left(x_{l},x_{r}\right).$ 
\end{thm}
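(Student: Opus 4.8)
The plan is to exploit the optimality of the best response together with the strict concavity of $u(\cdot,t_l)$, by writing the candidate's objective and the expected policy in parallel forms. Since $x_r>t_l$ forces $x_l\in[t_l,x_r)$ by Lemma \ref{G1}, for platforms to the left of $x_r$ we have $P(x_l,x_r)=F\!\left(\frac{x_l+x_r}{2}\right)$, so that
\[
U_{t_l}(x_l,x_r)=u(x_r,t_l)+P(x_l,x_r)\bigl[u(x_l,t_l)-u(x_r,t_l)\bigr],\qquad \pi(x_l,x_r)=x_r-P(x_l,x_r)(x_r-x_l).
\]
Maximizing $U_{t_l}(\cdot,x_r)$ is thus the same as maximizing $P(x_l,x_r)\bigl[u(x_l,t_l)-u(x_r,t_l)\bigr]$, and the desired conclusion $\pi(x_l',x_r)>\pi(x_l,x_r)$ is equivalent to the strict inequality $g(x_l)>g(x_l')$ for $g(x_l):=P(x_l,x_r)(x_r-x_l)$. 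So the whole statement reduces to showing that $g$ strictly decreases as we pass from the best response $x_l$ to any larger platform $x_l'$.

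I would first dispose of the case $x_l'\ge x_r$. Because $t_l\le x_l<x_r$ and $u(\cdot,t_l)$ is strictly decreasing to the right of its peak $t_l$, we have $u(x_l,t_l)-u(x_r,t_l)>0$ and $P(x_l,x_r)>0$ (the argument $\frac{x_l+x_r}{2}$ is positive and $F$ has full support), so $g(x_l)>0$ and therefore $\pi(x_l,x_r)<x_r$. On the other hand, for any $x_l'\ge x_r$ the enacted policy is a convex combination of $x_l'$ and $x_r$, both of which are at least $x_r$, so $\pi(x_l',x_r)\ge x_r>\pi(x_l,x_r)$ and the claim holds at once. This leaves the main case $t_l\le x_l<x_l'<x_r$.

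The crux is to chain two inequalities. Optimality of $x_l$ gives
\[
P(x_l,x_r)\bigl[u(x_l,t_l)-u(x_r,t_l)\bigr]\ \ge\ P(x_l',x_r)\bigl[u(x_l',t_l)-u(x_r,t_l)\bigr],
\]
while strict concavity of $u(\cdot,t_l)$ makes the difference quotient $x\mapsto\frac{u(x,t_l)-u(x_r,t_l)}{x_r-x}$ strictly increasing on $[t_l,x_r)$, whence
\[
\frac{u(x_l',t_l)-u(x_r,t_l)}{x_r-x_l'}\ >\ \frac{u(x_l,t_l)-u(x_r,t_l)}{x_r-x_l}.
\]
Every factor above is strictly positive, so the first inequality rearranges to $\frac{P(x_l,x_r)}{P(x_l',x_r)}\ge\frac{u(x_l',t_l)-u(x_r,t_l)}{u(x_l,t_l)-u(x_r,t_l)}$ and the second to $\frac{u(x_l',t_l)-u(x_r,t_l)}{u(x_l,t_l)-u(x_r,t_l)}>\frac{x_r-x_l'}{x_r-x_l}$. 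Composing the two yields $\frac{P(x_l,x_r)}{P(x_l',x_r)}>\frac{x_r-x_l'}{x_r-x_l}$, i.e. $g(x_l)>g(x_l')$, which is exactly what is needed.

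The main obstacle is this middle step: revealed optimality of a policy-motivated candidate controls only the product of the winning probability and the $u$-payoff difference, whereas $\pi$ weights the \emph{linear} difference $x_r-x_l$; the two are linked only by bootstrapping through the concavity of $u$. Once the chord-slope monotonicity of a strictly concave function is isolated, the argument collapses to a one-line chaining of ratios, and the only delicate points are to verify strict positivity of every factor and that the direction of the concavity inequality is the one required. The second half of the theorem, for candidate $r$ with $x_l<t_r$ and $x_r'<x_r$, follows by the symmetric argument applied to $\pi(x_l,x_r)=x_l+\bigl(1-P(x_l,x_r)\bigr)(x_r-x_l)$ and to the objective $\bigl(1-P(x_l,x_r)\bigr)\bigl[u(x_r,t_r)-u(x_l,t_r)\bigr]$.
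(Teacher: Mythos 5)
Your proposal is correct and takes essentially the same approach as the paper's proof: the same case split (disposing of $x_{l}^{\prime}\geq x_{r}$ first), the same revealed-preference inequality from $x_{l}\in\varphi_{t_{l}}\left(x_{r}\right)$, and the same use of strict concavity, since your chord-slope monotonicity inequality is algebraically identical to the chord inequality the paper invokes. The only difference is presentational: you chain the two inequalities directly to get $P\left(x_{l},x_{r}\right)\left(x_{r}-x_{l}\right)>P\left(x_{l}^{\prime},x_{r}\right)\left(x_{r}-x_{l}^{\prime}\right)$, while the paper supposes $\pi\left(x_{l}^{\prime},x_{r}\right)\leq\pi\left(x_{l},x_{r}\right)$ and derives a contradiction with strict concavity.
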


Theorem \ref{epo} contains the main insight of the paper: a rational
candidate would select a platform that is in the increasing
region of the expected policy function. The right panel of Figure
\ref{fig:Slide4.png-1} illustrates this. If $t_{i}$ is in the increasing
region of the expected policy function, the result follows since Lemma
\ref{G1} shows that $\varphi_{t_{i}}\left(x\right)$is between $t_{i}$
and $x$. Now suppose that candidate $i$'s ideal policy is, say $t_{i}^{a}$
(resp. $t_{i}^{b}$). Then selecting a platform between $t_{i}^{a}$
and $A$ (resp. between $B$ and $t_{i}^{b}$) would leave unexploited
the possibility of increasing the expected payoff for the candidate
by moderating their platform, as this would drive the expected policy
closer to candidate $i$'s ideal point policy while at the same increasing
the candidate's probability of winning the election.

Since the platforms chosen by candidates in equilibrium are endogenous,
hypotheses testing that relies on direct estimation of the shape of
the expected policy function may be riddled with simultaneity bias.
In order to address this issue, we note that the \textit{equilibrium indirect
expected policy function, }which maps candidate preferences into expected policies for a given equilibrium, shares the same comparative
statics implications of the expected policy function and can thus
be used to investigate whether credibility problems can arise in practice,
even in the presence of multiple equilibria.

The \emph{equilibrium indirect expected policy function} can be computed as follows:

If $\left(x_{l}^{*},x_{r}^{*}\right)\in E$, then

\[
\pi^{*}\left(t_{l},t_{r};x_{l}^{*},x_{r}^{*}\right):=\pi\left(x_{l}^{*}\left(t_{l},t_{r}\right),x_{r}^{*}\left(t_{l},t_{r}\right)\right).
\]
Let $\overline{\pi}^{*}\left(t_{l},t_{r}\right)$ and $\underline{\pi}^{*}\left(t_{l},t_{r}\right)$
be the equilibrium indirect expected policy corresponding to the largest and smallest
equilibrium in $E$, respectively. That is, $\overline{\pi}^{*}\left(t_{l},t_{r}\right)=\pi^{*}\left(t_{l},t_{r};\overline{x}_{l}^{*},\overline{x}_{r}^{*}\right)$ and $\underline{\pi}^{*}\left(t_{l},t_{r}\right)=\pi^{*}\left(t_{l},t_{r};\underline{x}_{l}^{*},\underline{x}_{r}^{*}\right)$

We know from Theorem \ref{epo} that in equilibrium the expected policy
is increasing in $x_{l}$ and $x_{r}$. From Theorem \ref{inc} we
know that the largest and smallest Nash equilibria of the game, $\left(\overline{x}_{l},\overline{x}_{r}\right)$
and $\left(\underline{x}_{l},\underline{x}_{r}\right)$, are increasing
in $t_{l}$ and $t_{r}.$ It thus follows that the equilibrium indirect expected
policy functions associated with the largest and smallest equilibria
are also increasing in $t_{l}$ and $t_{r}$. This is the main comparative
statics result of the paper, which we summarize below. 
\begin{cor}
Assume that SSCP and SLS hold. If $t_{l}^{\prime}>t_{l}$ then $\overline{\pi}^{*}\left(t_{l}^{\prime},t_{r}\right)\geq\overline{\pi}^{*}\left(t_{l},t_{r}\right)$
and $\underline{\pi}^{*}\left(t_{l}^{\prime},t_{r}\right)\geq\underline{\pi}^{*}\left(t_{l},t_{r}\right).$
If $t_{r}^{\prime}<t_{r}$ then $\overline{\pi}^{*}\left(t_{l},t_{r}^{\prime}\right)\leq\overline{\pi}^{*}\left(t_{l},t_{r}\right)$
and $\underline{\pi}^{*}\left(t_{l},t_{r}^{\prime}\right)\leq\underline{\pi}^{*}\left(t_{l},t_{r}\right).$ 
\end{cor}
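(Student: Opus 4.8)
The plan is to obtain the corollary by combining the two facts already proved: Theorem~\ref{inc}, which moves the largest and the smallest equilibria coordinatewise in the direction of a change in the ideal policies, and Theorem~\ref{epo}, which fixes the sign of the change in the expected policy function $\pi$ along a best-response move. It suffices to treat the single inequality $\overline{\pi}^{*}(t_l', t_r) \geq \overline{\pi}^{*}(t_l, t_r)$ for $t_l' > t_l$: the statement for the smallest equilibrium follows by putting the smallest equilibria in place of the largest, and the two statements in which $t_r$ decreases follow by reversing every inequality and traversing the path constructed below in the opposite direction.

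Write $(\overline{x}_l^{*}, \overline{x}_r^{*})$ for the largest equilibrium of the game with ideals $(t_l, t_r)$ and $(\overline{y}_l^{*}, \overline{y}_r^{*})$ for that of the game with ideals $(t_l', t_r)$, so that $\overline{\pi}^{*}(t_l, t_r) = \pi(\overline{x}_l^{*}, \overline{x}_r^{*})$ and $\overline{\pi}^{*}(t_l', t_r) = \pi(\overline{y}_l^{*}, \overline{y}_r^{*})$. By Theorem~\ref{inc} the largest equilibrium is non-decreasing in $t_l$ in each coordinate, so $\overline{y}_l^{*} \geq \overline{x}_l^{*}$ and $\overline{y}_r^{*} \geq \overline{x}_r^{*}$; the second of these is the cross-effect, resting on the monotonicity of best responses in the opponent's platform (SSCP) that underlies Theorem~\ref{inc}. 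I would then compare the two expected policies along the path $(\overline{x}_l^{*}, \overline{x}_r^{*}) \to (\overline{y}_l^{*}, \overline{x}_r^{*}) \to (\overline{y}_l^{*}, \overline{y}_r^{*})$, bounding $\pi$ on each leg with one part of Theorem~\ref{epo}. On the first leg the right platform is held at $\overline{x}_r^{*}$ while the left platform is raised from $\overline{x}_l^{*}$ to $\overline{y}_l^{*}$; since $(\overline{x}_l^{*}, \overline{x}_r^{*})$ is an equilibrium we have $\overline{x}_l^{*} \in \varphi_{t_l}(\overline{x}_r^{*})$, and Lemma~\ref{G2} gives $\overline{x}_r^{*} > \overline{x}_l^{*} \geq t_l$, so the first part of Theorem~\ref{epo} yields $\pi(\overline{y}_l^{*}, \overline{x}_r^{*}) \geq \pi(\overline{x}_l^{*}, \overline{x}_r^{*})$ (strict unless $\overline{y}_l^{*} = \overline{x}_l^{*}$). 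On the second leg the left platform is held at $\overline{y}_l^{*}$ while the right platform is raised from $\overline{x}_r^{*}$ to $\overline{y}_r^{*}$; since $(\overline{y}_l^{*}, \overline{y}_r^{*})$ is an equilibrium we have $\overline{y}_r^{*} \in \varphi_{t_r}(\overline{y}_l^{*})$, and Lemma~\ref{G2} gives $\overline{y}_l^{*} < \overline{y}_r^{*} \leq t_r$, so the second part of Theorem~\ref{epo} --- read as the statement that lowering the right platform from the best response $\overline{y}_r^{*}$ down to $\overline{x}_r^{*}$ cannot raise $\pi$ --- yields $\pi(\overline{y}_l^{*}, \overline{x}_r^{*}) \leq \pi(\overline{y}_l^{*}, \overline{y}_r^{*})$. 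Chaining the two legs gives $\pi(\overline{x}_l^{*}, \overline{x}_r^{*}) \leq \pi(\overline{y}_l^{*}, \overline{x}_r^{*}) \leq \pi(\overline{y}_l^{*}, \overline{y}_r^{*})$, which is exactly $\overline{\pi}^{*}(t_l, t_r) \leq \overline{\pi}^{*}(t_l', t_r)$.

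The delicate step, and the reason the path must be chosen with care, is the second leg. The expected policy function $\pi$ has both an increasing and a decreasing region in each argument (see the right panel of Figure~\ref{fig:Slide4.png-1}), so Theorem~\ref{epo} may be invoked only for a move that is anchored at a best response. The intermediate point $(\overline{y}_l^{*}, \overline{x}_r^{*})$ reached after the first leg is in general not an equilibrium, so $\overline{x}_r^{*}$ need not be a best response of the right candidate to $\overline{y}_l^{*}$, and one may not move the right coordinate starting from there. The fix is to orient the second leg so that it terminates at the new equilibrium, where $\overline{y}_r^{*} \in \varphi_{t_r}(\overline{y}_l^{*})$ does hold, and to apply the second part of Theorem~\ref{epo} at that endpoint. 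For the statements in which $t_r$ decreases the extremal equilibria move weakly downward in both coordinates, and the identical argument runs with the path reversed, the right-coordinate leg now anchored at the old equilibrium and the left-coordinate leg at the new one.
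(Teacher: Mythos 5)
Your proposal is correct and takes the same route as the paper: the paper's own proof of this corollary is nothing more than the informal paragraph preceding its statement, which combines Theorem \ref{inc} (the extremal equilibria move up with the ideal points) with Theorem \ref{epo} (the expected policy function is increasing at best responses); no separate formal proof appears in the appendix. Your two-leg path, with the first leg anchored at the old equilibrium (where $\overline{x}_{l}^{*}\in\varphi_{t_{l}}\left(\overline{x}_{r}^{*}\right)$) and the second leg anchored at the new one (where $\overline{y}_{r}^{*}\in\varphi_{t_{r}}\left(\overline{y}_{l}^{*}\right)$), is exactly the care needed to make that paragraph rigorous, since Theorem \ref{epo} can be invoked only for moves anchored at a best response; the paper leaves this entirely implicit. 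One point deserves flagging: the cross-effect $\overline{y}_{r}^{*}\geq\overline{x}_{r}^{*}$ that your second leg requires is not part of the literal statement of Theorem \ref{inc}, which asserts only the own-coordinate comparisons ($\overline{x}_{l}^{*}$ in $t_{l}$ and $\overline{x}_{r}^{*}$ in $t_{r}$). It is nonetheless true and recoverable from that theorem's proof: raising $t_{l}$ shifts $\overline{\varphi}_{t_{l}}$ up pointwise, hence shifts the monotone map $\left(x_{l},x_{r}\right)\mapsto\left(\overline{\varphi}_{t_{l}}\left(x_{r}\right),\overline{\varphi}_{t_{r}}\left(x_{l}\right)\right)$ up pointwise, and the Tarski argument of Proposition \ref{G4-1} then moves its largest fixed point up in \emph{both} coordinates. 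You acknowledge this by saying the cross-effect rests on the machinery ``underlying'' Theorem \ref{inc} rather than on its statement, which is the honest reading; the paper's own informal proof quietly relies on the same unstated strengthening.
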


\subsection{A model without commitment}

When candidates cannot precommit to adopt a particular platform, voters
expect that, if elected, a candidate will implement their most preferred
policy once in office. Therefore, the candidates cannot affect the
probabilities of being elected and in the unique equilibrium, $x_{l}^{\ast}=t_{l}$
and $x_{r}^{\ast}=t_{r}.$\footnote{Because of politicians’ inability to make credible commitments, their expected payoffs are unaffected by the choice of platform and they therefore choose the platform that is closest to their ideal policy.} Because of this, the adopted platforms
are trivially increasing in $t_{l}$ and $t_{r}.$

It turns out, however, that in the model without commitment, Theorem
\ref{epo} fails and hence the indirect expected policy function \emph{need
not be} increasing in the ideal policies of the politicians, as in
the model with commitment. We illustrate that this is the case with
an example.

Consider a situation where candidates form beliefs about the policy
preferred by the median voter, $\mathbf{m},$ as follows: $\mathbf{m}$
is a random variable that is distributed according to a triangular
distribution in the {[}0,1{]} interval, with mode 0.5. We also let
$u\left(x,t\right)=-\left(x-t\right)^{2}$ with $x_{r}>0.5,$ although
nothing in the example depends on these choices.\footnote{The counterexample can be built with any probability distribution over $\text{\textbf{m}}$
such that $xf\left(x\right)>F\left(x\right)$ for some value of $x$.
Distributions with these characteristics abound and include, for example,
many instances from the Beta and Power families. The counterexample can also
be built using Roemer's error distribution model of uncertainty (Roemer
2001, section 2.3).} We then investigate the behavior of $P\left(x_{l},x_{r}\right)$
as $x_{l}$ varies given a fixed value of $x_{r}$, and of $\pi^{*}\left(t_{l},t_{r}\right)$
as $t_{l}$ varies given a fixed value of $t_{r}$. We obtain that
\[
P\left(x_{l},x_{r}\right)=\left\{ \begin{array}{ccc}
2\left(\frac{x_{l}+x_{r}}{2}\right)^{2} & \text{if} & \text{ \ensuremath{x_{l}\leq1-x_{r}}}\\
1-2\left(1-\frac{x_{l}+x_{r}}{2}\right)^{2} & \text{if} & \text{}1-x_{r}<x_{l}<x_{r}\\
\frac{1}{2} & \text{if} & \ensuremath{x_{l}=x_{r}}\\
2\left(1-\frac{x_{l}+x_{r}}{2}\right)^{2} & \text{if} & x_{l}>x_{r}
\end{array}.\right.
\]
\begin{figure}
\centering

\includegraphics[scale=0.3]{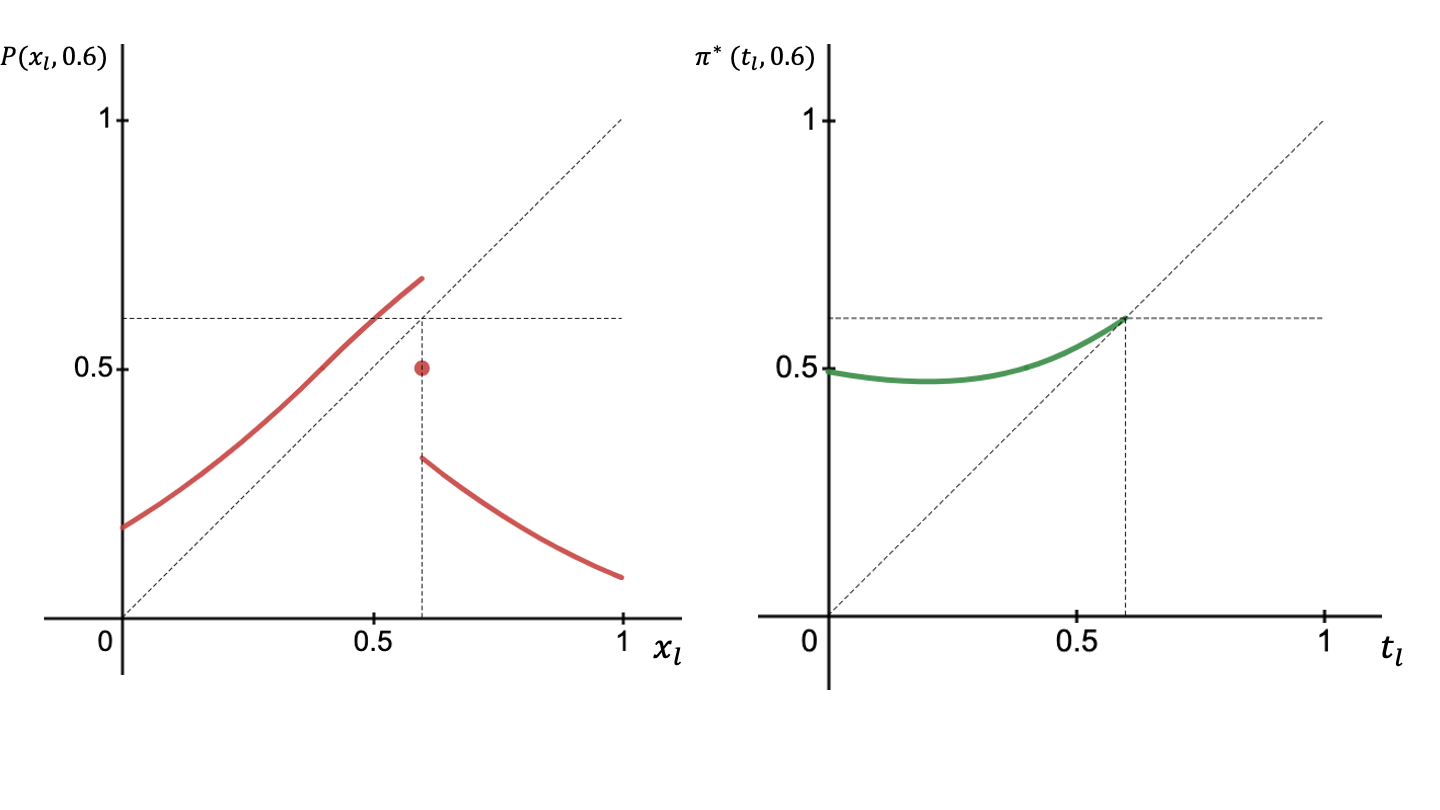} \caption{The Model Without Commitment}
\label{fig:Slide4.png-1-1} 
\end{figure}

The left panel of Figure \ref{fig:Slide4.png-1-1} represents the
behavior of $P\left(x_{l},x_{r}\right)$ given $x_{r}=0.6,$ and as
$x_{l}$ varies from zero to one. As expected, the probability of
candidate $l$ winning the election grows as the candidate's ideal
policy approaches $x_{r}=0.6$ from either side, and this probability
jumps to $0.5$ when both candidates have the same ideal policies.

The equilibrium indirect expected policy function in this case, when $x_{l}^{\ast}=t_{l}$
and $x_{r}^{\ast}=t_{r},$ can be more simply written as $\pi^{*}\left(t_{l},t_{r};t_{l},t_{r}\right)=\pi^{*}\left(t_{l},t_{r}\right)$, where 
\[
\pi^{*}\left(t_{l},t_{r}\right)=\left\{ \begin{array}{ccc}
2\left(\frac{t_{l}+t_{r}}{2}\right)^{2}\cdot t_{l}+\left[1-2\left(\frac{t_{l}+t_{r}}{2}\right)^{2}\right]\cdot t_{r} &  & \text{if \ensuremath{t_{l}\leq1-t_{r}}}\\
\left[1-2\left(1-\frac{t_{l}+t_{r}}{2}\right)^{2}\right]\cdot t_{l}+2\left(1-\frac{t_{l}+t_{r}}{2}\right)^{2}\cdot t_{r} &  & \text{if }1-t_{r}<t_{l}<t_{r}\\
t_{r} &  & \text{if \ensuremath{t_{l}=t_{r}}}\\
2\left(1-\frac{t_{l}+t_{r}}{2}\right)^{2}\cdot t_{l}+\left[1-2\left(1-\frac{t_{l}+t_{r}}{2}\right)^{2}\right]\cdot t_{r} &  & \text{if }t_{l}>t_{r}
\end{array},\right.
\]
which is a decreasing function of $t_{l}$ when evaluating the function
at any $t_{l}<\frac{t_{r}}{3}.$ To see this, notice that, when $t_{l}<1-t_{r},$
\[
\frac{d\pi^{*}\left(t_{l},t_{r}\right)}{dt_{l}}=\frac{1}{2}\left(3t_{l}^{2}+2t_{l}t_{r}-t_{r}^{2}\right).
\]
We obtain that $\frac{d\pi^{*}\left(0,t_{r}\right)}{dt_{l}}=-t_{r}^{2}<0,$and
$\frac{d^{2}\pi^{*}\left(t_{l},t_{r}\right)}{dt_{l}^{2}}=3t_{l}+t_{r}>0$.
Therefore, as $t_{l}$ grows from zero, $\frac{d\pi^{*}\left(t_{l},t_{r}\right)}{dt_{l}}$
becomes less negative, until it reaches zero, when $t_{l}=\frac{t_{r}}{3}$,
which is the only positive root of $3t_{l}^{2}+2t_{l}t_{r}-t_{r}^{2}$.

Hence, if the ideal point for candidate $l$ happens to be to the
left of $\frac{t_{r}}{3},$ the indirect expected policy function
will be decreasing in $t_{l}\ $ at that point. The right panel of
Figure \ref{fig:Slide4.png-1-1} represents the behavior of $\pi^{*}\left(t_{l},t_{r}\right)$
given $t_{r}=0.6,$ and as $t_{l}$ varies from zero to 0.6. The expected
policy drops as candidate $l$'s ideal policy approaches 0.2, as explained
above, and subsequently rises as candidate l's ideal policy grows
beyond 0.2, and all the way up to 0.6.

As this example shows, it is not hard to find cases in which candidates who cannot make credible commitments will have policy positions that fall on the downward-sloping segment of the indirect expected policy function.  This follows from the fact that without a commitment technology, policy platforms will simply reflect candidate preferences.  Some candidates have preferences that are so extreme that it would be in their interest to credibly commit to being more moderate if they could do so.  That they do not do so is thus good evidence of their inability to credibly make such promises.

This marks an important difference from the commitment case, in which candidates can and do make such promises.  In the presence of a commitment technology, extreme candidates will simply decide to moderate their policy platform to the level at which moderation drives the expected policy as close as possible to their ideal point. Therefore, extreme policy positions (in the precise sense of being so extreme that they drive expected policy away from the politician’s ideal point) are inconsistent with rational politicians being able to make credible commitments.

\section{Conclusions\label{Con}}

We have shown that when candidates can commit to a particular platform
during a uni-dimensional electoral contest where valence issues do
not arise there must be a positive association between the policies
we can expect will be adopted in (the smallest and the largest) equilibrium
and the preferred policies held by the candidates. We have also shown
that this need not be so if the candidates cannot commit to a particular
policy. The implication is that evidence of a negative relationship
between enacted and preferred policies in the data is suggestive of
candidates that hold positions from which they would like to move
from yet are unable to do so. This is the main result of the paper.

This approach can be extended to other models of policy location.
For example, Groseclose (2001) proposed a model in which a difference
in valence can lead candidates to assume extreme positions. Non-trivial
valence differences would violate our symmetry and -- under the Groseclose
conditions -- our monotonicity assumptions, so the approach taken
in Section 2 is not directly suitable for testing a valence model.
Future research could then focus on (i) allowing for valence and multidimensional
issues to play a role, and (ii) understanding what assumptions on
the beliefs held by the candidates about the distribution of voter
preferences, in lieu of SSCP and SLS, would allow our approach to equilibrium
existence and comparative statics to be applicable in these cases.

Our results suggest that empirical work on testing for the existence
of credibility problems in politics could advance through direct estimation
of the direct and indirect expected policy functions. A regression
of enacted policies on policy platforms could shed light on whether
the observed correlation between these is positive, as suggested by
models of commitment, or negative, as would be the case in citizen-candidate
environments. In order to address simultaneity problems in the estimation
of the expected policy function, platforms could be instrumented on
measures of policymaker or constituent preferences drawn from public
opinion surveys, in effect helping us recover the indirect expected
policy function.

Anectodally, examples of candidates whose platforms around a single
issue were simply too extreme for their own good abound (e.g., George
McGovern in 1972 against Richard Nixon and Mario Vargas Llosa in 1990
against Alberto Fujimori). A conventional analysis of the behavior
of these politicians would characterize the behavior as relying on
gross miscalculations, based on mistaken beliefs about what voters'
actual preferences really were. \ Under the alternative interpretation
that we espouse, there is nothing irrational about these policy platforms.
\ It wasn't the policy platforms of these politicians that cost them
the elections: it was their preferences. \ Had they proposed more
moderate platforms, voters would not have bought it. The presumption
that these politicians do not understand the political environment
in which they operate is not needed to explain how we see these politicians
behaving during election time.

\section{References}

Alesina, Alberto (1988), ``Credibility and Policy Convergence in
a Two-candidate System with Rational Voters,''\ American Economic
Review 78, 796-806.

Alesina, Alberto; Nouriel Roubini and Gerald Cohen (1997). Political
Cycles and the Macroeconomy. Berkeley: University of California Press.

Enriqueta Aragonès, Andrew Postlewaite, Thomas Palfrey (2007), Political
Reputations and Campaign Promises, Journal of the European Economic
Association 5(4), 846-884.

Amir, Rabah (2005), ``Supermodularity and complementarity in Economics:
An Elementary Survey,'' Southern Economic Journal 71 (3), 636-660.

Ashworth, Scott and Ethan Bueno de Mesquita (2006), ``Monotone Comparative
Statics and Models of Politics,''\ American Journal of Political
Science 50, 214-231.

Bagnoli, Mark and Ted Bergstrom (2005), ``Log-concave probability
and its applications,'' Economic Theory 26, 445-469.

Besley, Timothy and Anne Case (1995), “Does Electoral Accountability
Affect Economic Policy Choices? Evidence from Gubernatorial Term Limits.”
Quarterly Journal of Economics 110 (3), 769-798.

Besley, Timothy and Anne Case (2003), “Political Institutions and
Policy Choices: Evidence from the United States.” Journal of Economic
Literature 41(1), 7-73.

Besley, Timothy and Stephen Coate (1997), ``An Economic Model of
Representative Democracy,''\ Quarterly Journal of Economics 112,
85-114.

Bidwell, Kelly, Katherine Casey and Rachel Glennerster (2020), “Debates:
Voting and Expenditure Responses to Political Communication.” Journal
of Political Economy 128(8), 2880-2924.

Calvert, Randall (1985), ``Robustness of the Multidimensional Voting
Model: Candidate Motivations, Uncertainty, and Convergence,''\ American
Journal of Political Science 29, 69-95.

Crain, Mark. (2004), ``Dynamic Inconsistency,''\ in C. Rowley and
F. Schneider (eds.), Encyclopaedia of Public Choice, 481-484.

Chattopadhyay, Raghabendra and Esther Duflo (2004), ``Women as Policy
Makers: Evidence from a Randomized Policy Experiment in India,''\ Econometrica
72(5), 1409-1443.

Dixit, Avinash, et al. (2000) “The Dynamics of Political Compromise.”
Journal of Political Economy 108(3), 531-568.

Duggan, John and César Martinelli (2017), ``The Political Economy
of Dynamic Elections: Accountability, Commitment and Responsiveness,''
Journal of Economic Literature 55, 916-984.

Ferraz, Claudio, and Frederico Finan (2011), ``Electoral Accountability
and Corruption: Evidence from the Audits of Local Governments," American
Economic Review 101(4), 1274-1311.

Glazer, Amihai (1990), ``The Strategy of Candidate Ambiguity," The
American Political Science Review 84(1), 237-241.

Groseclose, Tim (2001), ``A Model of Candidate Location When One
Candidate Has a Valence Advantage," American Journal of Political
Science 45(4), 862-886.

Hinich, Melvin J. and Michael C. Munger (1997), Analytical Politics.
Cambridge: Cambridge University Press.

Kartik, Navin and Preston McAfee, R. P. (2007), “Signaling Character
in Electoral Competition”, American Economic Review 97, 852-870.

Navin Kartik and Richard Van Weelden (2018), ``Informative Cheap
Talk in Elections," The Review of Economic Studies 86 (2),755-784.

Lee, David. S., Enrico Moretti and Butler, M. J. (2004), ``Do Voters
Affect or Elect Policies? Evidence from the U.S. House," Quarterly
Journal of Economics 119(3), 807-859.

Mueller, Dennis (2003), Public Choice III. Cambridge: Cambridge University
Press.

Osborne, Martin and Al Slivinsky (1996), ``A model of Political Competition
with Citizen-Candidates,''\ Quarterly Journal of Economics 111,
64-96.

Panova, Elena (2016), Partially Revealing Campaign Promises. Journal
of Public Economic Theory 19(2), 312-330.

Roemer, John (1997), ``Political Economic Equilibrium When Parties
Represent Constituents: The Unidimensional Case," Social Choice and
Welfare 14, 479-502.

Roemer, John (2001), Political Competition, Cambridge: Harvard University
Press.

Shi, Min, and Svensson, Jakob (2006), ``Political budget cycles:
Do they differ across countries and why?," Journal of Public Economics
90, 1367-1389.

Sulkin, Tracy (2009), ``Campaign Appeals and Legislative Action”,
Journal of Politics 71, 1093--1108.

Vives, Xavier (2001), Oligopoly Pricing: Old Ideas and New Tools,
MIT Press.

Wittman, Donald (1977), ``Candidates with Policy Preferences: A Dynamic
Model," Journal of Economic Theory 14, 180-189. 

\section{Proofs}
\begin{claim}
The function $P\left(\pi_{l},\pi_{r}\right)$ satisfies the following
properties:

Property $(S)$: For every $x_{l},x_{r}\in T$, $P\left(x_{l},x_{r}\right)=1-P\left(x_{r},x_{l}\right)$.

Property $(M)$: For every $x_{l},x_{l}^{\prime},x_{r}\in T$ with
$x_{l}<x_{l}^{\prime}<x_{r}$ $,$ $P\left(x_{l},x_{r}\right)<P\left(x_{l}^{\prime},x_{r}\right)$
and for every $x_{l},x_{l}^{\prime},x_{r}\in T$ with $x_{r}<x_{l}<x_{l}^{\prime}$
$,$ $P\left(x_{l},x_{r}\right)>P\left(x_{l}^{\prime},x_{r}\right).$

Property $(Po)$: For every $x_{l},x_{r}\in T$ $,$ $0<P\left(x_{l},x_{r}\right)<1$. 
\end{claim}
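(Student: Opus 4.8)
The plan is to verify each of the three properties by a direct case analysis on the definition of $P$, using throughout the two facts encoded in the hypotheses on $F$: that $F$ is a cumulative distribution function supported on $[0,1]$ (so $0\le F\le 1$, with $F(0)=0$ and $F(1)=1$ by continuity), and that it has full support, which upgrades the weak monotonicity of a generic CDF to strict monotonicity on $[0,1]$. The single recurring observation that drives all three arguments is that the midpoint $\tfrac{x_l+x_r}{2}$ is symmetric in its arguments and, whenever $x_l\neq x_r$, lies strictly between $x_l$ and $x_r$.

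For Property $(S)$ I would split into the three cases $x_l<x_r$, $x_l=x_r$, and $x_l>x_r$. Swapping $x_l$ and $x_r$ leaves the midpoint unchanged while moving between the first and third branches of the definition. In the case $x_l<x_r$, for instance, $P(x_l,x_r)=F\big(\tfrac{x_l+x_r}{2}\big)$ while $P(x_r,x_l)$ falls in the third branch and equals $1-F\big(\tfrac{x_l+x_r}{2}\big)$, so the identity is immediate; the case $x_l>x_r$ is symmetric, and the case $x_l=x_r$ gives $\tfrac12=1-\tfrac12$.

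For Property $(M)$, in the first configuration both $x_l<x_r$ and $x_l'<x_r$, so both probabilities lie in the first branch, $P(x_l,x_r)=F\big(\tfrac{x_l+x_r}{2}\big)$ and $P(x_l',x_r)=F\big(\tfrac{x_l'+x_r}{2}\big)$. Since $x_l<x_l'$ the midpoints satisfy $\tfrac{x_l+x_r}{2}<\tfrac{x_l'+x_r}{2}$, both lie in $[0,1]$, and strict monotonicity of $F$ then yields the strict inequality. The second configuration is handled identically using the third branch, where the factor of $-1$ reverses the inequality, matching the claimed direction.

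For Property $(Po)$, the case $x_l=x_r$ is immediate since $P=\tfrac12$. When $x_l\neq x_r$ the relevant value is $F(m)$ (first branch) or $1-F(m)$ (third branch), with $m=\tfrac{x_l+x_r}{2}$; because $0\le x_l,x_r\le 1$ and $x_l\neq x_r$, the midpoint $m$ lies strictly between them and hence in the open interval $(0,1)$. Since $F(0)=0$, $F(1)=1$, and $F$ is strictly increasing by full support, we get $0=F(0)<F(m)<F(1)=1$, so both $F(m)$ and $1-F(m)$ lie in $(0,1)$, giving $0<P(x_l,x_r)<1$ in every case. There is no substantive obstacle here; the only point demanding care is the correct use of the full-support hypothesis, which is precisely what delivers the strict monotonicity needed for $(M)$ and the strict interior bounds needed for $(Po)$, as opposed to the weak versions a generic CDF would give.
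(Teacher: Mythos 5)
Your proof is correct and follows essentially the same route as the paper's: a branch-by-branch case analysis for $(S)$, strict monotonicity of $F$ for $(M)$, and the full-support hypothesis for $(Po)$ (the paper phrases the last point as positive mass on each side of the midpoint, which is equivalent to your argument via $F(0)=0<F(m)<F(1)=1$).
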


\begin{proof}
First consider property $(S)$. Let $x_{l},x_{r}\in T$. Then 
\[
1-P\left(x_{l},x_{r}\right)=\left\{ \begin{array}{ccc}
1-F\left(\frac{x_{l}+x_{r}}{2}\right) &  & \text{if }x_{l}<x_{r}\\
\frac{1}{2} &  & \text{if \ensuremath{x_{l}=x_{r}}}\\
F\left(\frac{x_{l}+x_{r}}{2}\right) &  & \text{if \ensuremath{x_{l}>x_{r}}}
\end{array}\right.
\]
\[
=P\left(x_{r},x_{l}\right),
\]

which is what we wanted to show.

Now consider Property $(M)$.

Let $x_{l},x_{l}^{\prime},x_{r}\in T$ with $x_{l}<x_{l}^{\prime}<x_{r}$.
Then $P\left(x_{l}^{\prime},x_{r}\right)=F\left(\frac{x_{l}^{\prime}+x_{r}}{2}\right)>F\left(\frac{x_{l}+x_{r}}{2}\right)=P\left(x_{l},x_{r}\right)$,
since $F$ is increasing. Now let $x_{l},x_{l}^{\prime},x_{r}\in T$
with $x_{r}<x_{l}<x_{l}^{\prime}$. Again, since $F$ is increasing,
\[
P\left(x_{l}^{\prime},x_{r}\right)=1-F\left(\frac{x_{l}^{\prime}+x_{r}}{2}\right)<1-F\left(\frac{x_{l}+x_{r}}{2}\right)=P\left(x_{l},x_{r}\right).
\]

Now consider Property $(Po)$. If $x_{l}=x_{r}$, then the result
follows, since $P\left(x_{r},x_{l}\right)=1/2$. Now let $x_{l}\neq x_{r}$.
Then the result follows since $F$ has full support, which means that,
no matter the values of $x_{l}$ and $x_{r}$, there is a positive
probability that $\mathbf{m}$ lies in the interval $\left(0,\frac{x_{l}+x_{r}}{2}\right)$
and in the interval $\left(\frac{x_{l}+x_{r}}{2},1\right)$. 
\end{proof}
\begin{lem}
The best response correspondence \textup{$\varphi_{i}$}
for candidate i with ideal policy $t_{i}$ and platform, $x$, chosen
by i's opponent has the following properties: 
\[
\left\{ \begin{array}{ccc}
\varphi_{t_{i}}\left(x\right) \subset (x, t_{i}] &  & \text{if \ensuremath{x<t_{i}}}\\
\varphi_{t_{i}}\left(x\right)=\{t_{i}\} &  & \text{if \ensuremath{x=t_{i}}}\\
\varphi_{t_{i}}\left(x\right) \subset [t_{i}, x)  &  & \text{if }x>t_{i}
\end{array}\right.
\]
\end{lem}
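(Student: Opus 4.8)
The plan is to express candidate $i$'s objective as a function of its own platform alone and to read off the location of its maximizers from single-peakedness of $u(\cdot,t_i)$ together with Properties $(M)$ and $(Po)$ from the Claim. Writing $y$ for candidate $i$'s own platform and $x$ for the opponent's, I would first note that by Property $(S)$ the probability that candidate $i$ wins when it plays $y$ equals $P(y,x)$ whether $i$ is the left or the right candidate, so that $i$'s expected payoff can be written uniformly as
\[
W(y) \;=\; P(y,x)\,u(y,t_i) + \bigl(1-P(y,x)\bigr)\,u(x,t_i) \;=\; u(x,t_i) + P(y,x)\bigl[u(y,t_i)-u(x,t_i)\bigr].
\]
Since $W$ is continuous on the compact interval $T$, the set $\varphi_{t_i}(x)$ is nonempty, so it suffices to locate its elements. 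Throughout I would use that strict concavity of $u(\cdot,t_i)$ with peak at $t_i$ makes $u(\cdot,t_i)$ strictly increasing on $[0,t_i]$ and strictly decreasing on $[t_i,1]$, and that $P(y,x)\in(0,1)$ for every $y$ by Property $(Po)$.

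Consider the case $x<t_i$ (the case $x>t_i$ being the mirror image). First I would rule out platforms weakly to the left of $x$: if $y\le x<t_i$ then $u(y,t_i)\le u(x,t_i)$, so the bracket is nonpositive and, since $P(y,x)>0$, $W(y)\le u(x,t_i)$ with equality only at $y=x$; meanwhile any $y\in(x,t_i]$ has $u(y,t_i)>u(x,t_i)$ and $P(y,x)>0$, so $W(y)>u(x,t_i)$. In particular $W(t_i)>u(x,t_i)\ge W(y)$ for every $y\le x$, so no such $y$ is a best response. Next I would exclude platforms strictly to the right of $t_i$ by comparing any $y>t_i$ directly with $t_i$: since $x<t_i<y$, Property $(M)$ gives $P(y,x)<P(t_i,x)$, while single-peakedness gives $u(y,t_i)<u(t_i,t_i)$. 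Splitting on the sign of $u(y,t_i)-u(x,t_i)$ then yields $W(y)<W(t_i)$ in both cases, leaving $\varphi_{t_i}(x)\subset(x,t_i]$.

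For the boundary case $x=t_i$, choosing $y=t_i$ gives $W(t_i)=u(t_i,t_i)$, whereas any $y\ne t_i$ gives $u(y,t_i)<u(t_i,t_i)$ and $P(y,x)>0$, so $W(y)=u(t_i,t_i)+P(y,x)\bigl[u(y,t_i)-u(t_i,t_i)\bigr]<u(t_i,t_i)$; hence $\varphi_{t_i}(t_i)=\{t_i\}$. The case $x>t_i$ is then handled by the symmetric argument, excluding $y\ge x$ (where moving away from $t_i$ lowers $u$ while $P>0$) and excluding $y<t_i$ by comparison with $t_i$ using the other inequality in Property $(M)$, giving $\varphi_{t_i}(x)\subset[t_i,x)$.

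The sign bookkeeping is routine; the one step that needs genuine care is the domination of platforms beyond $t_i$, since there \emph{both} the winning probability and the winning payoff move in the candidate's favor as it retreats toward $t_i$, and one must verify that the product $P(y,x)\bigl[u(y,t_i)-u(x,t_i)\bigr]$ is strictly increased by that retreat even when the bracket changes sign. I expect this to be the main obstacle, and splitting on whether $u(y,t_i)-u(x,t_i)$ is nonpositive or strictly positive, as above, resolves it cleanly: in the first subcase $W(y)\le u(x,t_i)<W(t_i)$, and in the second both factors are strictly positive and strictly smaller at $y$ than at $t_i$.
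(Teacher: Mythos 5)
Your proof is correct and takes essentially the same route as the paper's: reduce the objective to the form $u(x,t_i)+P(y,x)\left[u(y,t_i)-u(x,t_i)\right]$, rule out platforms on the far side of $x$ by the sign of the bracket, and dominate any platform beyond $t_i$ by comparing it to $t_i$ itself via Property $(M)$ and single-peakedness. The only differences are cosmetic: you unify the two candidates through Property $(S)$ where the paper treats candidate $l$ and omits the symmetric case, and you make explicit the sign split that the paper handles with a terse inequality chain.
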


\begin{proof}
After rearranging terms and eliminating constants that do not depend
on candidate $l$'s choice, the candidate's decision problem simplifies
to:

\[
\underset{x_{l}}{\max}\hspace{0.1cm}P\left(x_{l},x_{r}\right)\left(u\left(x_{l},t_{l}\right)-u\left(x_{r},t_{l}\right)\right).
\]

This function is discontinuous at $x_{l}=x_{r}$, except when $F(x_{r})=0.5$.

Let $x<t_{l}$.

First, notice that, for every $x_{l}\in\left[0,x\right)$, $P\left(x_{l},x\right)\left(u\left(x_{l},t_{l}\right)-u\left(x,t_{l}\right)\right)<0=P\left(x,x\right)\left(u\left(x,t_{l}\right)-u\left(x,t_{l}\right)\right)$.
This shows that $\varphi_{t_{l}}\left(x\right)\subset [x,1]$. Next, notice
that for any $x_{l}\in\left(x,t_{l}\right]$, we have that 
\[
P\left(x_{l},x\right)\left(u\left(x_{l},t_{l}\right)-u\left(x,t_{l}\right)\right)>0=P\left(x,x\right)\left(u\left(x,t_{l}\right)-u\left(x,t_{l}\right)\right).
\]
This shows that $\varphi_{t_{l}}\left(x\right)\subset (x,1]$. Now notice that,
for every $x_{l}\in\left(t_{l},1\right]$,
\[
P\left(t_{l},x\right)\left(u\left(t_{l},t_{l}\right)-u\left(x,t_{l}\right)\right)>P\left(x_{l},x\right)\left(u\left(x_{l},t_{l}\right)-u\left(x,t_{l}\right)\right).
\]
This is so because, by Property $(M)$, $P\left(t_{l},x\right)>P\left(x_{l},x\right)$
and $u\left(t_{l},t_{l}\right)>u\left(x_{l},t_{l}\right)$. This shows
that $\varphi_{t_{l}}\left(x\right)\subset [0,t_{l}]$. We have thus shown
that if $x<t_{l}$ then $\varphi_{t_{l}}\left(x\right)\subset (x,t_{l}]$.

Let $x=t_{l}$.

Then clearly $\varphi_{t_{l}}\left(x\right)=\{t_{l}\}$.

Now let $x>t_{l}$.

As before, notice that, for every $x_{l}\in\left(x,1\right]$, 
\[
P\left(x_{l},x\right)\left(u\left(x_{l},t_{l}\right)-u\left(x,t_{l}\right)\right)<0=P\left(x,x\right)\left(u\left(x,t_{l}\right)-u\left(x,t_{l}\right)\right).
\]
This shows that $\varphi_{t_{l}}\left(x\right)\subset [0,x]$. Next, notice
that for any $x_{l}\in\left[t_{l},x\right)$ we have that $P\left(x_{l},x\right)\left(u\left(x_{l},t_{l}\right)-u\left(x,t_{l}\right)\right)>0=P\left(x,x\right)\left(u\left(x,t_{l}\right)-u\left(x,t_{l}\right)\right)$.
This shows that $\varphi_{t_{l}}\left(x\right)\subset [0,x)$. Now notice that,
for every $x_{l}\in\left[0,t_{l}\right)$, 
\[
P\left(t_{l},x\right)\left(u\left(t_{l},t_{l}\right)-u\left(x,t_{l}\right)\right)>P\left(x_{l},x\right)\left(u\left(x_{l},t_{l}\right)-u\left(x,t_{l}\right)\right).
\]
This is so because, by Property $(M)$, $P\left(t_{l},x\right)>P\left(x_{l},x\right)$
and $u\left(t_{l},t_{l}\right)>u\left(x_{l},t_{l}\right)$. This shows
that $\varphi_{t_{l}}\left(x\right)\subset [t_{l},1]$. We have thus shown
that if $x>t_{l}$ then $\varphi_{t_{l}}\left(x\right)\subset [t_{l},x)$. This completes the proof for $\varphi_{t_{l}}$.

The proof for $\varphi_{t_{r}}$ is similar and we omit it here. 
\end{proof}
\begin{prop}
Assume that SSCP holds. Then if $t_{l}\leq x_{r}<x_{r}^{\prime}\leq t_{r}$,
then we have that $\underline{\varphi}_{t_{l}}\left(x_{r}^{\prime}\right)\geq\overline{\varphi}_{t_{l}}\left(x_{r}\right),$
and if $t_{r}\geq x_{l}^{\prime}>x_{l}\geq t_{l}$, then we have that $\underline{\varphi}_{t_{r}}\left(x_{l}^{\prime}\right)\geq\overline{\varphi}_{t_{r}}\left(x_{l}\right).$ 
\end{prop}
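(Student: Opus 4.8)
My plan is to prove both inequalities by contradiction, leaning on the best-response localization of Lemma~\ref{G1} to drop the relevant platforms into the ordered configuration $t\le x<x'<y<y'\le t'$ that SSCP requires, and then reading off a contradiction from the optimality that defines the best responses. The two statements are mirror images and would use, respectively, the first ($U_t$) and the second ($U_{t'}$) implication of SSCP.

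For the first statement I would fix $t_l\le x_r<x_r'\le t_r$ and dispose of the boundary case $x_r=t_l$ at once (there $\overline{\varphi}_{t_l}(x_r)=t_l$ while Lemma~\ref{G1} gives $\underline{\varphi}_{t_l}(x_r')\ge t_l$). Assuming $x_r>t_l$, I would suppose toward a contradiction that $\underline{\varphi}_{t_l}(x_r')<\overline{\varphi}_{t_l}(x_r)$ and abbreviate $x:=\underline{\varphi}_{t_l}(x_r')$, $x':=\overline{\varphi}_{t_l}(x_r)$, so $x<x'$. Lemma~\ref{G1} confines $\varphi_{t_l}(x_r)$ to $[t_l,x_r)$ and $\varphi_{t_l}(x_r')$ to $[t_l,x_r')$, giving the chain $t_l\le x<x'<x_r<x_r'$, i.e. exactly $t\le x<x'<y<y'$ with $y=x_r$, $y'=x_r'$. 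Optimality of $x'$ against $x_r$ gives $U_{t_l}(x',x_r)\ge U_{t_l}(x,x_r)$; the first implication of SSCP upgrades this to the strict $U_{t_l}(x',x_r')>U_{t_l}(x,x_r')$, which collides with the optimality of $x$ against $x_r'$. That contradiction delivers the claim.

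For the second statement I would fix $t_l\le x_l<x_l'\le t_r$, again handling the boundary $x_l'=t_r$ immediately ($\underline{\varphi}_{t_r}(x_l')=t_r\ge\overline{\varphi}_{t_r}(x_l)$ by Lemma~\ref{G1}), and otherwise suppose $\underline{\varphi}_{t_r}(x_l')<\overline{\varphi}_{t_r}(x_l)$. Writing $y:=\underline{\varphi}_{t_r}(x_l')$ and $y':=\overline{\varphi}_{t_r}(x_l)$, Lemma~\ref{G1} (which now places $\varphi_{t_r}(x_l)$ in $(x_l,t_r]$ and $\varphi_{t_r}(x_l')$ in $(x_l',t_r]$) produces $t_l\le x_l<x_l'<y<y'\le t_r$, matching $t\le x<x'<y<y'\le t'$ with $t'=t_r$. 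Here optimality of $y$ against $x_l'$ gives $U_{t_r}(x_l',y)\ge U_{t_r}(x_l',y')$, the second implication of SSCP upgrades this to $U_{t_r}(x_l,y)>U_{t_r}(x_l,y')$, and this contradicts optimality of $y'$ against $x_l$.

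The one place I would be careful is the bookkeeping that secures the strict chain $t\le x<x'<y<y'\le t'$: it is Lemma~\ref{G1}, not SSCP, that supplies the essential strict gaps ($x'<x_r$ in the first case, $x_l'<y$ in the second), and these are precisely what make SSCP applicable. I would also make sure to pair each best response with the correct weak revealed-preference inequality so that it genuinely contradicts the strict conclusion of SSCP, and to keep the singleton boundary cases ($x_r=t_l$, $x_l'=t_r$) separate, as above.
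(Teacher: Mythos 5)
Your proposal is correct and follows essentially the same route as the paper's proof: the boundary cases ($x_r=t_l$, $x_l'=t_r$) are dispatched via Lemma \ref{G1}, and in the main case Lemma \ref{G1} supplies the strict chain $t_l\le x<x'<x_r<x_r'\le t_r$ (resp.\ $t_l\le x_l<x_l'<y<y'\le t_r$) so that the relevant implication of SSCP upgrades the revealed-preference inequality at the old opponent platform into a strict inequality at the new one, contradicting optimality. Only your labeling of the best responses differs from the paper's; the logic is identical.
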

\begin{proof}
Let $t_{l}\leq x_{r}<x_{r}^{\prime}\leq t_{r}$. Let $x_{l}=\overline{\varphi}_{t_{l}}\left(x_{r}\right)\,$ and $x_{l}^{\prime}=\underline{\varphi}_{t_{l}}\left(x_{r}^{\prime}\right).$
By definition of $\overline{\varphi}_{t_{l}}$, 
\[
U_{t_{l}}\left(x_{l},x_{r}\right)\geq U_{t_{l}}\left(x_{l}^{\prime},x_{r}\right).
\]
We want to show that $x_{l}^{\prime}\geq x_{l}.$

If $x_{r}=t_{l}$ then Lemma \ref{G1} implies that $x_{l}=t_{l}$ and $x_{l}^{\prime}\in \left[t_{l},x_{r}^{\prime}\right)$, and hence $x_{l}^{\prime}\geq x_{l}.$

If $x_{r}>t_{l},$ then $x_{l}^{\prime}\geq x_{l}$ follows because,  if $x_{l}>x_{l}^{\prime}$, Lemma \ref{G1} implies that

\[t_{l}\leq x_{l}^{\prime}<x_{l}<x_{r}<x_{r}^{\prime}\leq t_{r}\]

Then, by $SSCP$,
\[
U_{t_{l}}\left(x_{l},x_{r}^{\prime}\right)>U_{t_{l}}\left(x_{l}^{\prime},x_{r}^{\prime}\right)
\]
which contradicts the fact that $x_{l}^{\prime}\,$ is optimal given
$x_{r}^{\prime}$\ for a candidate with ideal policy $t_{l}.$  Hence, $x_{l}^{\prime}\geq x_{l}.$

Combining these implications, we obtain that
$\underline{\varphi}_{t_{l}}\left(x_{r}^{\prime}\right)\geq\overline{\varphi}_{t_{l}}\left(x_{r}\right)$
when $t_{r}\geq x_{r}^{\prime}>x_{r}\geq t_{l}$.

Now let $t_{l}\leq x_{l}<x_{l}^{\prime}\leq t_{r}$. Let $x_{r}=\overline{\varphi}_{t_{r}}\left(x_{l}\right)\,$ and $x_{r}^{\prime}=\underline{\varphi}_{t_{r}}\left(x_{l}^{\prime}\right).$
By definition of $\overline{\varphi}_{t_{r}}$, 
\[
U_{t_{r}}\left(x_{l}^{\prime},x_{r}^{\prime}\right)\geq U_{t_{r}}\left(x_{l}^{\prime},x_{r}\right).
\]
We want to show that $x_{r}^{\prime}\geq x_{r}.$
If $x_{l}^{\prime}=t_{r}$ then Lemma \ref{G1} implies that $x_{r}^{\prime}=t_{r}$ and also that $x_{r}\in \left(x_{l},t_{r}\right]$, and hence $x_{r}\leq x_{r}^{\prime}.$
If $x_{l}^{\prime}<t_{r},$ then $x_{r}^{\prime}\geq x_{r}$ follows because,  if $x_{r}>x_{r}^{\prime}$, Lemma \ref{G1} implies that

\[t_{l}\leq x_{l}<x_{l}^{\prime}<x_{r}^{\prime}<x_{r}\leq t_{r}\]

Then, by $SSCP$,
\[
U_{t_{r}}\left(x_{l},x_{r}^{\prime}\right)>U_{t_{r}}\left(x_{l},x_{r}\right)
\]
which contradicts the fact that $x_{r}$ is optimal given
$x_{l}$ for a candidate with ideal policy $t_{r}.$ Hence, $x_{r}^{\prime}\geq x_{r}.$ 

Combining these implications, we obtain that
$\underline{\varphi}_{t_{r}}\left(x_{l}^{\prime}\right)\geq\overline{\varphi}_{t_{r}}\left(x_{l}\right)$
when $t_{r}\geq x_{l}^{\prime}>x_{l}\geq t_{l}$.
\end{proof}
\begin{prop}
Assume that SSCP holds. Then the set $E$
of Nash equilibria is non-empty and it has (coordinatewise) largest
and smallest elements $\left(\overline{x}_{l}^{*},\overline{x}_{r}^{*}\right)$
and $\left(\underline{x}_{l}^{*},\underline{x}_{r}^{*}\right)$. 
\end{prop}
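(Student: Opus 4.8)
The plan is to recast the search for Nash equilibria as a fixed-point problem for a monotone map on a complete lattice and then invoke Tarski's fixed point theorem, exactly as in the theory of supermodular games (see Amir 2005, Vives 2001). First I would fix the relevant domain. Using Lemma \ref{G1}, I would show that the box $L := [t_l,t_r]\times[t_l,t_r]$, ordered coordinatewise, is a complete lattice that is invariant under best responses: if $x_r\in[t_l,t_r]$ then $\varphi_{t_l}(x_r)\subseteq[t_l,x_r]\subseteq[t_l,t_r]$, and symmetrically $\varphi_{t_r}(x_l)\subseteq[x_l,t_r]\subseteq[t_l,t_r]$ whenever $x_l\in[t_l,t_r]$. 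A companion step is to show that \emph{every} equilibrium already lies in $L$: from Lemma \ref{G1}, $x_l^*=x_r^*$ would force $x_l^*=t_l$ and $x_r^*=t_r$, contradicting $t_l<t_r$, while $x_r^*<x_l^*$ is excluded because sandwiching $x_l^*$ between $t_l$ and $x_r^*$ forces $x_l^*\le t_l$, whereas sandwiching $x_r^*$ between $t_r$ and $x_l^*\le t_l<t_r$ forces $x_r^*\ge x_l^*$, a contradiction. Hence $t_l\le x_l^*<x_r^*\le t_r$. This containment is what lets the extremal fixed points I build on $L$ dominate and be dominated by all of $E$, not merely by the equilibria that happen to lie in $L$.

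Next I would introduce the two extremal-selection maps. Because the simplified objective $P(x_l,x_r)\,(u(x_l,t_l)-u(x_r,t_l))$ is continuous on the relevant closed interval once the diagonal is handled (the discontinuity of $P$ at $x_l=x_r$ is multiplied by a payoff difference that vanishes there, and the optimal value is strictly positive, so the optimum is never attained at the opponent's platform), each best-response set is compact, and the largest and smallest selections $\overline{\varphi}_{t_i}$ and $\underline{\varphi}_{t_i}$ are well defined. I then set $\bar b(x_l,x_r)=(\overline{\varphi}_{t_l}(x_r),\overline{\varphi}_{t_r}(x_l))$ and $\underline b(x_l,x_r)=(\underline{\varphi}_{t_l}(x_r),\underline{\varphi}_{t_r}(x_l))$. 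Proposition \ref{G4} makes both selections nondecreasing in the opponent's platform on $[t_l,t_r]$, and since in the joint map each player's action is fed only into the opponent's best response, $\bar b$ and $\underline b$ are order-preserving maps from the complete lattice $L$ into itself.

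Finally I would apply Tarski's fixed point theorem. For $\bar b$ it yields a greatest fixed point $\hat x=\sup\{x\in L: x\le \bar b(x)\}$; this $\hat x$ is a Nash equilibrium because its coordinates are genuine best responses, and it dominates every equilibrium $y$, since $y_l\le\overline{\varphi}_{t_l}(y_r)$ and $y_r\le\overline{\varphi}_{t_r}(y_l)$ give $y\le\bar b(y)$, placing $y$ in the set whose supremum is $\hat x$. Setting $(\overline x_l^*,\overline x_r^*)=\hat x$ produces the coordinatewise largest equilibrium, and in particular $E\neq\emptyset$. Symmetrically, the least fixed point $\check x=\inf\{x\in L:\underline b(x)\le x\}$ of $\underline b$ is an equilibrium dominated by every $y\in E$ (since $\underline b(y)\le y$), so $(\underline x_l^*,\underline x_r^*)=\check x$ is the smallest equilibrium.

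I expect the main obstacle to be the two preliminary steps rather than the fixed-point argument itself: carefully extracting both domain invariance and the containment of all equilibria in $L$ from the case distinctions in Lemma \ref{G1}, and confirming that the extremal selections are well defined despite the discontinuity of the winning probability on the diagonal. Once these are in place, the monotonicity supplied by Proposition \ref{G4} together with Tarski's theorem deliver the conclusion directly.
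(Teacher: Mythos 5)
Your proposal is correct and follows essentially the same route as the paper: apply Tarski's fixed point theorem to the extremal selections $\left(x_{l},x_{r}\right)\mapsto\left(\overline{\varphi}_{t_{l}}\left(x_{r}\right),\overline{\varphi}_{t_{r}}\left(x_{l}\right)\right)$ (resp.\ the smallest selections) on the lattice $\left[t_{l},t_{r}\right]^{2}$, using Lemma \ref{G1} for invariance, Proposition \ref{G4} for monotonicity, and the observation that every equilibrium satisfies $\left(x_{l},x_{r}\right)\leq\left(\overline{\varphi}_{t_{l}}\left(x_{r}\right),\overline{\varphi}_{t_{r}}\left(x_{l}\right)\right)$ to conclude the extremal fixed points bound all of $E$. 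You are somewhat more careful than the paper on two points it leaves implicit (well-definedness of the extremal selections near the diagonal discontinuity, and containment of all equilibria in the box, which the paper defers to Lemma \ref{G2}), but the argument is the same.
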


\begin{proof}
From Lemma \ref{G1} we know that that the map $\left(x_{l},x_{r}\right)\mapsto\left[\overline{\varphi}_{t_{l}}\left(x_{r}\right),\overline{\varphi}_{t_{r}}\left(x_{l}\right)\right]$ takes points in $\left[t_{l},t_{r}\right]\times \left[t_{l},t_{r}\right]$ and maps them to $\left[t_{l},t_{r}\right]\times \left[t_{l},t_{r}\right]$, and from Proposition \ref{G4} we know the map
is non-decreasing in $\left[t_{l},t_{r}\right]\times \left[t_{l},t_{r}\right]$. It follows from Tarski's fixed point theorem that
the set 

\[\overline{E}=\left\{ \left(x_{l},x_{r}\right):\left(\overline{\varphi}_{t_{l}}\left(x_{r}\right),\overline{\varphi}_{t_{r}}\left(x_{l}\right)\right)\geq\left(x_{l},x_{r}\right)\right\} 
\]
is non-empty. Since every Nash equilibrium satisfies
\[
\left(\overline{\varphi}_{t_{l}}\left(x_{r}\right),\overline{\varphi}_{t_{r}}\left(x_{l}\right)\right)\geq\left(x_{l},x_{r}\right)),
\]
then the set of Nash equlibria is non-empty and the greatest element $\left(\overline{x}_{l},\overline{x}_{r}\right)$ of $\overline{E}$ is the greatest Nash equilibrium of the game. The proof for the least element $\left(\underline{x}_{l},\underline{x}_{r}\right)$ is similar and we omit it here. 
\end{proof}
\begin{lem}
In every equilibrium $(x_{l}^{\ast},x_{r}^{\ast})$, $t_{l}\leq x_{l}^{\ast}<x_{r}^{\ast}\leq t_{r}.$ 
\end{lem}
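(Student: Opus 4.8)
The plan is to feed the two best-response containments that any equilibrium must satisfy against each other. In an equilibrium $(x_l^*,x_r^*)$ both platforms are mutual best responses, so $x_l^*\in\varphi_{t_l}(x_r^*)$ and $x_r^*\in\varphi_{t_r}(x_l^*)$. Lemma \ref{G1} confines each of these sets to an interval determined by the candidate's ideal point and the opponent's platform; in particular it forces \emph{every} element of the best-response set to lie strictly on the ideal-point side of the opponent's platform (it never coincides with a distinct opponent platform and never overshoots the ideal point), so the set-valuedness of $\varphi$ causes no difficulty. The entire argument then reduces to checking the three branches of Lemma \ref{G1} for each candidate and discarding the configurations incompatible with $t_l\le x_l^*<x_r^*\le t_r$.

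First I would establish strict separation, $x_l^*<x_r^*$. To rule out $x_l^*=x_r^*=:x$, I apply the containment for $r$: by Lemma \ref{G1}, $x\in\varphi_{t_r}(x)$ is possible only when $x=t_r$, and that coincidence is then killed by the containment for $l$, since $t_l<t_r$ gives $\varphi_{t_l}(t_r)\subset[t_l,t_r)$, so $t_r=x_l^*\notin\varphi_{t_l}(x_r^*)$. To rule out $x_l^*>x_r^*$, note that Lemma \ref{G1} applied to $l$ contradicts this immediately unless $x_r^*<t_l$ (the only branch in which $l$'s best response can exceed the opponent's platform). In that surviving branch one has $x_r^*<x_l^*\le t_l<t_r$, hence $x_l^*<t_r$; feeding this into the containment for $r$ yields $x_r^*\in(x_l^*,t_r]$, i.e.\ $x_r^*>x_l^*$, a contradiction. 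Thus $x_l^*<x_r^*$.

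With separation in hand, the two outer inequalities follow quickly. For $t_l\le x_l^*$, the only branch of Lemma \ref{G1} for $l$ that would permit $x_l^*<t_l$ is $x_r^*<t_l$; but that branch forces $x_l^*>x_r^*$, contradicting the separation just proved. Symmetrically, the only branch for $r$ that would permit $x_r^*>t_r$ is $x_l^*>t_r$, which forces $x_r^*<x_l^*$, again contradicting separation. Therefore $t_l\le x_l^*<x_r^*\le t_r$.

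I expect the crux to be the leapfrog case met while excluding $x_l^*>x_r^*$: neither best-response condition alone disposes of it, because the configuration $x_r^*<t_l\le x_l^*$ is perfectly consistent with $l$'s own optimality (candidate $l$ is simply pulled rightward toward its ideal point, past the opponent). Resolving it is precisely where one must chain the two conditions, invoking $r$'s best response to reach the contradiction — the one place the proof genuinely uses mutual best response rather than a single candidate's incentives. Everything else is routine bookkeeping across the branches of Lemma \ref{G1}.
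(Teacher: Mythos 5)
Your proof is correct and takes essentially the same route as the paper's: both are case analyses over the branches of Lemma \ref{G1} for each candidate, and both hinge on the same chaining step you identify as the crux — the leapfrog configuration $x_{r}^{\ast}<t_{l}\leq x_{l}^{\ast}$ is consistent with $l$'s optimality alone and is eliminated only by feeding $x_{l}^{\ast}<t_{r}$ into $r$'s best-response containment. The only difference is bookkeeping order: you establish strict separation $x_{l}^{\ast}<x_{r}^{\ast}$ first and deduce the outer bounds from it, whereas the paper rules out the outer violations first and obtains $x_{l}^{\ast}<x_{r}^{\ast}$ as the final step.
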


\begin{proof}
The proof follows from combining the logical implications of the properties
of the best responses $\varphi_{t_{l}}$ and $\varphi_{t_{r}}$
as identified in Lemma \ref{G1}.

First, notice there is no equilibrium $\left(x_{l},x_{r}\right)$
with $x_{r}\leq t_{l}$ or with $x_{l}<t_{l}$. To see this, notice
that if $x_{r}\leq t_{l}$ then $x_{r}<x_{l}\leq t_{l}$ since $x_{l}\in \varphi_{t_{l}}\left(x_{r}\right)$
but then $x_{l}<x_{r}\leq t_{r}$ since $x_{r}\in\varphi_{t_{r}}\left(x_{l}\right)$.
On the other hand, $x_{l}<t_{l}$ is only a best response for $l$
if $x_{r}<t_{l}$, which we just showed cannot arise in equilibrium.
Therefore, $x_{l}<t_{l}$ also cannot arise in equilibrium.

Similarly, notice there is no equilibrium $\left(x_{l},x_{r}\right)$
with $x_{l}\geq t_{l}$ or with $x_{r}>t_{r}$. To see this, notice
that if $x_{l}\geq t_{r}$ then $t_{r}\leq x_{r}<x_{l}$ since $x_{r}\in\varphi_{t_{r}}\left(x_{l}\right)$
but then $t_{l}\leq x_{l}<x_{r}$ since $x_{l}\in\varphi_{t_{l}}\left(x_{r}\right)$.
On the other hand, $x_{r}>t_{r}$ is only a best response for $r$
if $x_{l}>t_{r}$, which we just showed cannot arise in equilibrium.
Therefore, $x_{r}>t_{r}$ also cannot arise in equilibrium.

Then, in equilibrium, $t_{l}\leq x_{l}<t_{r}$ and $t_{l}<x_{r}\leq t_{r}$.
It then follows that $t_{l}\leq x_{l}<x_{r}$ since $x_{l}\in\varphi_{t_{l}}\left(x_{r}\right)$.

We have thus established that, in equilibrium, $t_{l}\leq x_{l}^{\ast}<x_{r}^{\ast}\leq t_{r}$.
\end{proof}
\begin{claim}
\label{scp}Assume that $SLS$ holds. If $t_{l}<t_{l}^{\prime}\leq x_{l}<x_{l}^{\prime}<x_{r}$
then 
\[
U_{t_{l}}\left(x_{l}^{\prime},x_{r}\right)\geq U_{t_{l}}\left(x_{l},x_{r}\right)\Rightarrow U_{t_{l}^{\prime}}\left(x_{l}^{\prime},x_{r}\right)>U_{t_{l}^{\prime}}\left(x_{l},x_{r}\right),
\]
and if $x_{l}<x_{r}<x_{r}^{\prime}\leq t_{r}<t_{r}^{\prime}$ then
\[
U_{t_{r}}\left(x_{l},x_{r}^{\prime}\right)\geq U_{t_{r}}\left(x_{l},x_{r}\right)\Rightarrow U_{t_{r}^{\prime}}\left(x_{l},x_{r}^{\prime}\right)>U_{t_{r}^{\prime}}\left(x_{l},x_{r}\right).
\]
\end{claim}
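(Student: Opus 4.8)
The plan is to collapse each of the two implications onto the single-crossing structure that SLS already encodes. I will carry out the left-candidate case in detail; the right-candidate case is the mirror image. First I would rewrite the objective so that the opponent's (fixed) platform enters only as an additive constant. Since $U_{t}(x_l,x_r)=P(x_l,x_r)(u(x_l,t)-u(x_r,t))+u(x_r,t)$ and $x_r$ is held fixed, the term $u(x_r,t)$ cancels in the comparison of $x_l'$ against $x_l$. Writing $p=P(x_l,x_r)$, $p'=P(x_l',x_r)$, $a(t)=u(x_l,t)-u(x_r,t)$ and $b(t)=u(x_l',t)-u(x_r,t)$, the payoff difference is $U_{t}(x_l',x_r)-U_{t}(x_l,x_r)=p'\,b(t)-p\,a(t)$. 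Because $t\le x_l<x_l'<x_r$ and $u(\cdot,t)$ is strictly concave with peak at $t$, both $a(t)$ and $b(t)$ are strictly positive for $t\in\{t_l,t_l'\}$; and by Property $(M)$ together with Property $(Po)$ we have $0<p<p'<1$. These positivity facts are exactly what will license the divisions in the next step.

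Next I would turn the hypothesis into a ratio inequality. The assumption $U_{t_l}(x_l',x_r)\ge U_{t_l}(x_l,x_r)$ reads $p'\,b(t_l)\ge p\,a(t_l)$, and dividing by the positive quantities $a(t_l)$ and $p'$ gives $b(t_l)/a(t_l)\ge p/p'$. The conclusion I want, $U_{t_l'}(x_l',x_r)>U_{t_l'}(x_l,x_r)$, is equivalent, after multiplying out by the positive $a(t_l')$ and $p'$, to $b(t_l')/a(t_l')>p/p'$. Now SLS, invoked in its first configuration with $x=x_l$, $x'=x_l'$, $y=x_r$, $t=t_l$, $t'=t_l'$ (whose domain restriction $t<t'\le x<x'<y$ is precisely the hypothesis $t_l<t_l'\le x_l<x_l'<x_r$), yields the strict inequality $b(t_l')/a(t_l')>b(t_l)/a(t_l)$. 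Chaining, $b(t_l')/a(t_l')>b(t_l)/a(t_l)\ge p/p'$, which is the desired strict conclusion.

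For the right candidate I would run the symmetric reduction, this time isolating the now-constant term $u(x_l,t)$ via $U_{t}(x_l,x_r)=u(x_l,t)+Q(x_l,x_r)(u(x_r,t)-u(x_l,t))$ with $Q=1-P$. Setting $c(t)=u(x_r,t)-u(x_l,t)$, $d(t)=u(x_r',t)-u(x_l,t)$, $q=Q(x_l,x_r)$ and $q'=Q(x_l,x_r')$, the ordering $x_l<x_r<x_r'\le t$ forces $c(t),d(t)>0$ for $t\in\{t_r,t_r'\}$, while $x_r<x_r'$ with Property $(Po)$ forces $0<q'<q<1$. The difference becomes $U_{t}(x_l,x_r')-U_{t}(x_l,x_r)=q'\,d(t)-q\,c(t)$, so the hypothesis reads $d(t_r)/c(t_r)\ge q/q'$ and the goal reads $d(t_r')/c(t_r')>q/q'$. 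The second configuration of SLS (with $y=x_l$, $x=x_r$, $x'=x_r'$, $t=t_r$, $t'=t_r'$, matching $y<x<x'\le t<t'$) delivers $d(t_r')/c(t_r')>d(t_r)/c(t_r)$, and the same chaining closes the argument.

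I expect the only real obstacle to be bookkeeping rather than ideas: I must verify that SLS is applied in exactly the configuration whose domain restriction coincides with the stated orderings on $(t_l,t_l',x_l,x_l',x_r)$ and on $(x_l,x_r,x_r',t_r,t_r')$, and I must confirm the strict positivity of every denominator (the payoff gaps $a,b,c,d$ from strict concavity, and the winning/losing probabilities from Properties $(M)$ and $(Po)$) so that each division and each multiplication preserves the direction of the inequality. With those checks in place, the weak inequality comes from the hypothesis and the strictness comes entirely from SLS.
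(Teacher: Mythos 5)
Your proof is correct and follows essentially the same route as the paper's: both reduce the hypothesis $U_{t}(x_l',x_r)\geq U_{t}(x_l,x_r)$ to an inequality between the ratio of payoff differences $\frac{u(x_l',t)-u(x_r,t)}{u(x_l,t)-u(x_r,t)}$ and the ratio of winning probabilities, apply SLS in exactly the same configuration to get the strict inequality at the higher ideal point, and chain; the treatment of the right candidate via $1-P$ is likewise identical. The only difference is cosmetic bookkeeping (you isolate $p/p'$ on one side where the paper keeps the product against $1$), plus your more explicit verification of the positivity of the denominators, which the paper leaves implicit.
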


\begin{proof}
Let $t_{l}<t_{l}^{\prime}\leq x_{l}<x_{l}^{\prime}<x_{r}$.

Assume that $U_{t_{l}}\left(x_{l}^{\prime},x_{r}\right)\geq U_{t_{l}}\left(x_{l},x_{r}\right)$.
By the definition of $U$, 
\[
P\left(x_{l}^{\prime},x_{r}\right)u\left(x_{l}^{\prime},t_{l}\right)+\left(1-P\left(x_{l}^{\prime},x_{r}\right)\right)u\left(x_{r},t_{l}\right)\geq
\]
\[
P\left(x_{l},x_{r}\right)u\left(x_{l},t_{l}\right)+\left(1-P_{l}\left(x_{l},x_{r}\right)\right)u\left(x_{r},t_{l}\right),
\]
which boils down to 
\[
\frac{P\left(x_{l}^{\prime},x_{r}\right)}{P\left(x_{l},x_{r}\right)}\frac{\left[u\left(x_{l}^{\prime},t_{l}\right)-u\left(x_{r},t_{l}\right)\right]}{\left[u\left(x_{l},t_{l}\right)-u\left(x_{r},t_{l}\right)\right]}\geq1.
\]
By $SLS$, we have that 
\[
\frac{u\left(x_{l}^{\prime},t_{l}^{\prime}\right)-u\left(x_{r},t_{l}^{\prime}\right)}{u\left(x_{l},t_{l}^{\prime}\right)-u\left(x_{r},t_{l}^{\prime}\right)}>\frac{u\left(x_{l}^{\prime},t_{l}\right)-u\left(x_{r},t_{l}\right)}{u\left(x_{l},t_{l}\right)-u\left(x_{r},t_{l}\right)},
\]
therefore 
\[
\frac{P\left(x_{l}^{\prime},x_{r}\right)}{P\left(x_{l},x_{r}\right)}\frac{\left[u\left(x_{l}^{\prime},t_{l}^{\prime}\right)-u\left(x_{r},t_{l}^{\prime}\right)\right]}{\left[u\left(x_{l},t_{l}^{\prime}\right)-u\left(x_{r},t_{l}^{\prime}\right)\right]}>\frac{P\left(x_{l}^{\prime},x_{r}\right)}{P\left(x_{l},x_{r}\right)}\frac{\left[u\left(x_{l}^{\prime},t_{l}\right)-u\left(x_{r},t_{l}\right)\right]}{\left[u\left(x_{l},t_{l}\right)-u\left(x_{r},t_{l}\right)\right]},
\]
which means that 
\[
\frac{P\left(x_{l}^{\prime},x_{r}\right)}{P\left(x_{l},x_{r}\right)}\frac{\left[u\left(x_{l}^{\prime},t_{l}^{\prime}\right)-u\left(x_{r},t_{l}^{\prime}\right)\right]}{\left[u\left(x_{l},t_{l}^{\prime}\right)-u\left(x_{r},t_{l}^{\prime}\right)\right]}>1,
\]
from where it follows that 
\[
U_{t_{l}^{\prime}}\left(x_{l}^{\prime},x_{r}\right)>U_{t_{l}^{\prime}}\left(x_{l},x_{r}\right).
\]

Now let $x_{l}<x_{r}<x_{r}^{\prime}\leq t_{r}<t_{r}^{\prime}$.

Assume that $U_{t_{r}}\left(x_{l},x_{r}^{\prime}\right)\geq U_{t_{r}}\left(x_{l},x_{r}\right)$.
By the definition of $U$, 
\[
P\left(x_{l},x_{r}^{\prime}\right)u\left(x_{l},t_{r}\right)+\left(1-P\left(x_{l},x_{r}^{\prime}\right)\right)u\left(x_{r}^{\prime},t_{r}\right)\geq
\]
\[
P\left(x_{l},x_{r}\right)u\left(x_{l},t_{r}\right)+\left(1-P_{l}\left(x_{l},x_{r}\right)\right)u\left(x_{r},t_{r}\right),
\]
which boils down to 
\[
\frac{\left[1-P\left(x_{l},x_{r}^{\prime}\right)\right]}{\left[1-P\left(x_{l},x_{r}\right)\right]}\frac{u\left(x_{r}^{\prime},t_{r}\right)-u\left(x_{l},t_{r}\right)}{u\left(x_{r},t_{r}\right)-u\left(x_{l},t_{r}\right)}\geq1.
\]
By $SLS$, we have that 
\[
\frac{u\left(x_{r}^{\prime},t_{r}^{\prime}\right)-u\left(x_{l},t_{r}^{\prime}\right)}{u\left(x_{r},t_{r}^{\prime}\right)-u\left(x_{l},t_{r}^{\prime}\right)}>\frac{u\left(x_{r}^{\prime},t_{r}\right)-u\left(x_{l},t_{r}\right)}{u\left(x_{r},t_{r}\right)-u\left(x_{l},t_{r}\right)},
\]

therefore

\[
\frac{\left[1-P\left(x_{l},x_{r}^{\prime}\right)\right]}{\left[1-P\left(x_{l},x_{r}\right)\right]}\frac{u\left(x_{r}^{\prime},t_{r}^{\prime}\right)-u\left(x_{l},t_{r}^{\prime}\right)}{u\left(x_{r},t_{r}^{\prime}\right)-u\left(x_{l},t_{r}^{\prime}\right)}>\frac{\left[1-P\left(x_{l},x_{r}^{\prime}\right)\right]}{\left[1-P\left(x_{l},x_{r}\right)\right]}\frac{u\left(x_{r}^{\prime},t_{r}\right)-u\left(x_{l},t_{r}\right)}{u\left(x_{r},t_{r}\right)-u\left(x_{l},t_{r}\right)},
\]
which means that 
\[
\frac{\left[1-P\left(x_{l},x_{r}^{\prime}\right)\right]}{\left[1-P\left(x_{l},x_{r}\right)\right]}\frac{u\left(x_{r}^{\prime},t_{r}^{\prime}\right)-u\left(x_{l},t_{r}^{\prime}\right)}{u\left(x_{r},t_{r}^{\prime}\right)-u\left(x_{l},t_{r}^{\prime}\right)}>1,
\]
from where it follows that 
\[
U_{t_{r}^{\prime}}\left(x_{l},x_{r}^{\prime}\right)>U_{t_{r}^{\prime}}\left(x_{l},x_{r}\right).
\]
\end{proof}
\begin{thm}
Assume that SSCP and SLS hold. Let $t_{l}<t_{l}^{\prime}<t_{r}<t_{r}^{\prime}$. Then
\begin{itemize}
    \item $\overline{x}_{l}^{*}\left(t_{l}^{\prime},t_{r}\right)\geq \overline{x}_{l}^{*}\left(t_{l},t_{r}\right)$ and $\overline{x}_{r}^{*}\left(t_{l},t_{r}^{\prime}\right)\geq \overline{x}_{r}^{*}\left(t_{l},t_{r}\right)$
    \item $\underline{x}_{l}^{*}\left(t_{l}^{\prime},t_{r}\right)\geq \underline{x}_{l}^{*}\left(t_{l},t_{r}\right)$ and $\underline{x}_{r}^{*}\left(t_{l},t_{r}^{\prime}\right)\geq \underline{x}_{r}^{*}\left(t_{l},t_{r}\right)$
\end{itemize}

\end{thm}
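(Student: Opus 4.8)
The plan is to mirror the two-layer structure already used for Propositions \ref{G4} and \ref{G4-1}: first I would upgrade Claim \ref{scp} into a statement that each candidate's best response is monotone in their \emph{own} ideal policy (exactly as SSCP was upgraded into the opponent-monotonicity of Proposition \ref{G4}), and then feed this, together with Proposition \ref{G4}, into a Tarski fixed-point comparison that transfers the monotonicity from best responses to the extremal equilibria. I will carry out the comparison in $t_l$ (i.e.\ $\overline{x}^*(t_l',t_r)\ge\overline{x}^*(t_l,t_r)$ and the analogue for the smallest equilibrium); the comparison in $t_r$ is symmetric, using the second half of Claim \ref{scp} in place of the first.

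First I would establish type-monotonicity of best responses: for $t_l<t_l'$ and any opponent platform $x\in[t_l,t_r]$, one has $\underline{\varphi}_{t_l'}(x)\ge\overline{\varphi}_{t_l}(x)$. The argument parallels the proof of Proposition \ref{G4}. Set $a=\overline{\varphi}_{t_l}(x)$ and $b=\underline{\varphi}_{t_l'}(x)$ and suppose toward a contradiction that $a>b$. When $x\le t_l'$, Lemma \ref{G1} already forces $a<b$ (check the subcases $x=t_l$, $t_l<x<t_l'$, and $x=t_l'$ directly), a contradiction; so assume $x>t_l'$. Then Lemma \ref{G1} gives the ordering $t_l<t_l'\le b<a<x$, which is exactly the configuration of Claim \ref{scp}. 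Since $a$ is optimal for type $t_l$ against $x$ we have $U_{t_l}(a,x)\ge U_{t_l}(b,x)$, so Claim \ref{scp} yields $U_{t_l'}(a,x)>U_{t_l'}(b,x)$, contradicting the optimality of $b$ for type $t_l'$. Since $\overline{\varphi}_{t_l'}\ge\underline{\varphi}_{t_l'}$ and $\overline{\varphi}_{t_l}\ge\underline{\varphi}_{t_l}$, this simultaneously delivers $\overline{\varphi}_{t_l'}(x)\ge\overline{\varphi}_{t_l}(x)$ and $\underline{\varphi}_{t_l'}(x)\ge\underline{\varphi}_{t_l}(x)$, and symmetrically for the right candidate.

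Next I would run the lattice comparative-statics argument on the common box $L=[t_l,t_r]^2$. Write $\overline{\Phi}_{t_l,t_r}(x_l,x_r)=(\overline{\varphi}_{t_l}(x_r),\overline{\varphi}_{t_r}(x_l))$ for the upper joint best response, whose greatest fixed point is, as in Proposition \ref{G4-1}, the greatest equilibrium. The maps $\overline{\Phi}_{t_l,t_r}$ and $\overline{\Phi}_{t_l',t_r}$ agree in their second coordinate and, by the monotonicity just proved, satisfy $\overline{\Phi}_{t_l',t_r}(x)\ge\overline{\Phi}_{t_l,t_r}(x)$ for every $x\in L$. Evaluating the dominating map at $p=\overline{x}^*(t_l,t_r)$, which is a fixed point of $\overline{\Phi}_{t_l,t_r}$, gives $\overline{\Phi}_{t_l',t_r}(p)\ge\overline{\Phi}_{t_l,t_r}(p)=p$, so $p$ lies in the set $\{x\in L:\overline{\Phi}_{t_l',t_r}(x)\ge x\}$ whose supremum (by Tarski, exactly as in Proposition \ref{G4-1}) is the greatest fixed point of $\overline{\Phi}_{t_l',t_r}$, namely $\overline{x}^*(t_l',t_r)$. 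Hence $\overline{x}^*(t_l',t_r)\ge p=\overline{x}^*(t_l,t_r)$, which contains the stated inequality on the $l$-coordinate. The smallest-equilibrium claim follows identically from the dual Tarski argument applied to the lower joint best response $\underline{\Phi}$, using $\underline{\varphi}_{t_l'}(x)\ge\underline{\varphi}_{t_l}(x)$ and the set $\{x\in L:\underline{\Phi}_{t_l',t_r}(x)\le x\}$, whose infimum is the smallest equilibrium.

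The main obstacle I anticipate is a bookkeeping one about domains and the monotonicity of the dominating map. The comparison lives naturally on $L=[t_l,t_r]^2$, yet the equilibria of the $t_l'$-model occupy only the sub-box $[t_l',t_r]^2$; I must check that $\overline{\Phi}_{t_l',t_r}$ maps $L$ into itself (this follows from Lemma \ref{G1}, since every best response lies between the relevant ideal policy and the opponent's platform, both of which are in $[t_l,t_r]$) and, more delicately, that it is monotone on all of $L$, including opponent platforms $x<t_l'$ at which the left candidate responds to an opponent on its own side. Proposition \ref{G4} supplies monotonicity of $\overline{\varphi}_{t_l'}$ only on $[t_l',t_r]$, so monotonicity on $[t_l,t_l')$ must be supplied separately; this is the standard supermodular-games ingredient invoked in the text (cf.\ Amir 2005) and is where the argument requires the most care. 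Once monotonicity on the common lattice is secured, Lemma \ref{G2} places all equilibria inside $L$, which identifies the Tarski supremum with the greatest equilibrium of the $t_l'$-model, and the whole comparison closes as above.
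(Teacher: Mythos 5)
Your proposal is correct and follows essentially the same route as the paper's own proof: you first upgrade Claim \ref{scp} into own-type monotonicity of the best responses (same case split via Lemma \ref{G1} when the opponent's platform lies below versus above the new ideal point, same contradiction via Claim \ref{scp}), and then transfer this to the extremal equilibria by evaluating the dominating joint best-response map at the old extremal equilibrium and invoking the Tarski sub-solution characterization from Proposition \ref{G4-1}, exactly as the paper does. The domain subtlety you flag --- that monotonicity of $\overline{\varphi}_{t_{l}^{\prime}}$ on $\left[t_{l},t_{l}^{\prime}\right)$ is not supplied by Proposition \ref{G4} --- is genuine but is glossed over by the paper as well; it can be patched either by mirroring the proof of Proposition \ref{G4} using the second half of SSCP (with the type pair $\left(t_{l},t_{l}^{\prime}\right)$), or by noting via Lemma \ref{G1} and Lemma \ref{G2} that the comparison point can always be pushed into the smaller box $\left[t_{l}^{\prime},t_{r}\right]^{2}$ where Proposition \ref{G4} applies.
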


\begin{proof}
Let $t_{l}<t_{l}^{\prime}.$ Fix $x_{r}>t_{l}$, and let  $x_{l}\in\varphi_{t_{l}}\left(x_{r}\right)$. Notice that, by Lemma \ref{G1}, $x_{l}\in \left[t_{l},x_{r}\right)$. Let $x_{l}^{\prime}\in\varphi_{t_{l}^{\prime}}\left(x_{r}\right).$ We want to show that  $x_{l}^{\prime} \geq x_{l}$.

If $t_{l}^{\prime}\geq x_{l}$,
then Lemma \ref{G1} implies that either $x_{l}^{\prime}\in \left[t_{l}^{\prime},x_{r}\right)$ or $x_{l}^{\prime}\in \left(x_{r},t_{l}^{\prime}\right]$, and either way $x_{l}^{\prime}\geq x_{l}$. 

Now assume that $t_{l}^{\prime}\in\left(t_{l},x_{l}\right)$. By definition of $\varphi_{t_{l}}$, 
\[
U_{t_{l}}\left(x_{l},x_{r}\right)\geq U_{t_{l}}\left(x_{l}^{\prime},x_{r}\right).
\]
We argue that $x_{l}^{\prime}\geq x_{l}.$ Assume not, that is, assume that 
$x_{l}^{\prime}<x_{l}$. Then we obtain that $t_{l}<t_{l}^{\prime}\leq x_{l}^{\prime}<x_{l}<x_{r}$
and, by Claim \ref{scp}, 
\[
U_{t_{l}^{\prime}}\left(x_{l},x_{r}\right)>U_{t_{l}^{\prime}}\left(x_{l}^{\prime},x_{r}\right)
\]
which contradicts the fact that $x_{l}^{\prime}\,$ is optimal given
$x_{r}\ $ for a candidate with ideal policy $t_{l}^{\prime}.$ Hence, $x_{l}^{\prime}\geq x_{l}$ (and, in particular, $\overline{\varphi}_{t_{l}^{\prime}}\left(x_{r}\right)\geq \overline{\varphi}_{t_{l}}\left(x_{r}\right)$ and $\underline{\varphi}_{t_{l}^{\prime}}\left(x_{r}\right)\geq \underline{\varphi}_{t_{l}}\left(x_{r}\right)$) when $x_{r}>t_{l}$.

Let $t_{r}<t_{r}^{\prime}$. Fix $x_{l}<t_{r}^{\prime}$ and let $x_{r}^{\prime}\in\varphi_{t_{r}^{\prime}}\left(x_{l}\right)$. By Lemma \ref{G1}, $x_{r}^{\prime}\in \left(x_{l},t_{r}^{\prime}\right]$. Let  $x_{r}\in\varphi_{t_{r}}\left(x_{l}\right).$ We now want to show that $x_{r}^{\prime}\geq x_{r}.$

If $t_{r}\leq x_{r}^{\prime}$,
then Lemma \ref{G1} implies that either $x_{r}\in \left[t_{r},x_{l}\right)$ or $x_{r}\in \left(x_{l},t_{r}\right]$, and either way $x_{r}^{\prime}\geq x_{r}$.

Now assume that $t_{r}\in\left(x_{r}^{\prime},t_{r}^{\prime}\right)$. By definition of $\varphi_{t_{r}^{\prime}}$, 
\[
U_{t_{r}^{\prime}}\left(x_{l},x_{r}^{\prime}\right)\geq U_{t_{r}^{\prime}}\left(x_{l},x_{r}\right)
\]
 
 We argue that $x_{r}^{\prime}\geq x_{r}.$ Assume not, that is, assume that 
$x_{r}^{\prime}<x_{r}$. We then obtain that $x_{l}<x_{r}^{\prime}<x_{r}\leq t_{r}<t_{r}^{\prime}$
and, by Claim \ref{scp}, 
\[
U_{t_{r}}\left(x_{l},x_{r}^{\prime}\right)>U_{t_{r}}\left(x_{l},x_{r}\right)
\]
which contradicts the fact that $x_{r}$ is optimal given
$x_{l}\ $ for a candidate with ideal policy $t_{r}.$ Hence, $x_{r}^{\prime}\geq x_{r}$ (and, in particular, $\overline{\varphi}_{t_{r}^{\prime}}\left(x_{l}\right)\geq \overline{\varphi}_{t_{r}}\left(x_{l}\right)$ and $\underline{\varphi}_{t_{r}^{\prime}}\left(x_{l}\right)\geq \underline{\varphi}_{t_{r}}\left(x_{l}\right)$) for $t_{r}^{\prime}>x_{l}$.

Let $\overline{x}^{*}\left(t_{l},t_{r}\right)=\sup\left\{ \left(x_{l},x_{r}\right):\left(\overline{\varphi}_{t_{l}}\left(x_{r}\right),\overline{\varphi}_{t_{r}}\left(x_{l}\right)\right)\geq\left(x_{l},x_{r}\right)\right\} $
be the largest fixed point of $\left(x_{l},x_{r}\right)\mapsto\left[\overline{\varphi}_{t_{l}}\left(x_{r}\right),\overline{\varphi}_{t_{r}}\left(x_{l}\right)\right]$,
and therefore the largest Nash equilibrium of the game. This equilibrium
exists, by Proposition \ref{G4-1}.

By Lemma \ref{G2},  $t_{l}\leq \overline{x}_{l}^{*}\left(t_{l},t_{r}\right)<\overline{x}_{r}^{*}\left(t_{l},t_{r}\right)\leq t_{r}.$ 

We obtain that $\overline{x}_{r}^{*}\left(t_{l},t_{r}\right)>t_{l}$ and therefore $\overline{x}_{l}^{*}\left(t_{l}^{\prime},t_{r}\right)\geq \overline{x}_{l}^{*}\left(t_{l},t_{r}\right)$. Similarly, we obtain that $t_{r}^{\prime}>\overline{x}_{l}^{*}\left(t_{l},t_{r}\right)$ and therefore  $\overline{x}_{r}^{*}\left(t_{l},t_{r}^{\prime}\right)\geq \overline{x}_{r}^{*}\left(t_{l},t_{r}\right)$ since we just showed that $\overline{\varphi}_{t_{l}^{\prime}}\left(x_{r}\right)\geq \overline{\varphi}_{t_{l}}\left(x_{r}\right)$ when $x_{r}>t_{l}$, and $\overline{\varphi}_{t_{r}^{\prime}}\left(x_{l}\right)\geq \overline{\varphi}_{t_{r}}\left(x_{l}\right)$ when $t_{r}^{\prime}>x_{l}.$ The case of the smallest equilibrium is analogous. 
\end{proof}

\begin{thm}
Let $x_{r}>t_{l}$, \textup{$x_{l}\in\varphi_{t_{l}}\left(x_{r}\right)$
and $x_{l}^{\prime}>x_{l}$.} Then $\pi\left(x_{l}^{\prime},x_{r}\right)>\pi\left(x_{l},x_{r}\right).$
Let $x_{l}<t_{r}$, \textup{$x_{r}\in\varphi_{t_{r}}\left(x_{l}\right)$
and} \textup{$x_{r}^{\prime}<x_{r}$}. Then $\pi\left(x_{l},x_{r}^{\prime}\right)<\pi\left(x_{l},x_{r}\right).$ 
\end{thm}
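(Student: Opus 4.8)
The plan is to exploit the identity
\[
\pi(x_l,x_r)=P(x_l,x_r)x_l+\bigl(1-P(x_l,x_r)\bigr)x_r=x_r-P(x_l,x_r)(x_r-x_l),
\]
so that the claim $\pi(x_l',x_r)>\pi(x_l,x_r)$ is equivalent to $P(x_l',x_r)(x_r-x_l')<P(x_l,x_r)(x_r-x_l)$. By Lemma \ref{G1} the hypothesis $x_r>t_l$ forces $x_l\in[t_l,x_r)$, so $t_l\le x_l<x_r$ and $P(x_l,x_r)>0$ by Property $(Po)$; in particular $\pi(x_l,x_r)<x_r$. This disposes of the two easy cases: if $x_l'=x_r$ then $\pi(x_l',x_r)=x_r$, and if $x_l'>x_r$ then $\pi(x_l',x_r)=x_r+P(x_l',x_r)(x_l'-x_r)>x_r$, so in both $\pi(x_l',x_r)>x_r>\pi(x_l,x_r)$. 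The substantive case is $x_l<x_l'<x_r$, on which I concentrate.

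Here I would bring in optimality. As in the reduction used to prove Lemma \ref{G1}, maximizing $U_{t_l}(\cdot,x_r)$ is the same as maximizing $P(s,x_r)\,[u(s,t_l)-u(x_r,t_l)]$, since the additive term $u(x_r,t_l)$ does not depend on the candidate's own platform. Hence $x_l\in\varphi_{t_l}(x_r)$ yields
\[
P(x_l,x_r)\,[u(x_l,t_l)-u(x_r,t_l)]\ge P(x_l',x_r)\,[u(x_l',t_l)-u(x_r,t_l)].
\]
Writing $a=u(x_l,t_l)-u(x_r,t_l)$ and $a'=u(x_l',t_l)-u(x_r,t_l)$, both are positive because $u(\cdot,t_l)$ is strictly concave with peak at $t_l\le x_l$ and hence strictly decreasing on $[t_l,x_r]$. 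Setting $P=P(x_l,x_r)$ and $P'=P(x_l',x_r)$, so that $0<P<P'$ by Property $(M)$, the optimality inequality rearranges to $P'/P\le a/a'$.

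The crux is a strict-concavity estimate converting these payoff differences into the platform distances $d=x_r-x_l$ and $d'=x_r-x_l'$. Writing $x_l'=\lambda x_l+(1-\lambda)x_r$ with $\lambda=d'/d\in(0,1)$, strict concavity of $u(\cdot,t_l)$ gives $u(x_l',t_l)>\lambda u(x_l,t_l)+(1-\lambda)u(x_r,t_l)$, and subtracting $u(x_r,t_l)$ yields $a'>\lambda a=(d'/d)\,a$, i.e. $a/a'<d/d'$. Chaining, $P'/P\le a/a'<d/d'$, so $P'd'<Pd$, which is precisely what we needed. The second statement is entirely symmetric: rewriting $\pi(x_l,x_r)=x_l+\bigl(1-P(x_l,x_r)\bigr)(x_r-x_l)$, Lemma \ref{G1} gives $x_l<x_r\le t_r$, the cases $x_r'\le x_l$ are immediate from $\pi(x_l,x_r)>x_l$, and on $x_l<x_r'<x_r$ the same three moves—reduce the best response of $r$ to maximizing $\bigl(1-P(x_l,s)\bigr)\,[u(s,t_r)-u(x_l,t_r)]$, bound the probability ratio via optimality, and apply strict concavity of $u(\cdot,t_r)$—deliver $\pi(x_l,x_r')<\pi(x_l,x_r)$.

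I expect the single load-bearing step to be the concavity estimate that converts the optimality condition, which is phrased in payoff differences, into a statement about expected policy, which is phrased in platform distances; everything else is bookkeeping together with the sign facts supplied by Lemma \ref{G1} and Properties $(M)$ and $(Po)$.
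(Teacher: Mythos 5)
Your proposal is correct and follows essentially the same route as the paper's proof: the same optimality inequality from $x_{l}\in\varphi_{t_{l}}\left(x_{r}\right)$, the same strict-concavity estimate at the convex combination $x_{l}^{\prime}=\lambda x_{l}+\left(1-\lambda\right)x_{r}$, and the same reduction of the expected-policy comparison to comparing $P\cdot\left(x_{r}-x_{l}\right)$ terms, with the easy cases $x_{l}^{\prime}\geq x_{r}$ handled identically. The only difference is organizational: you chain the two inequalities directly ($P^{\prime}/P\leq a/a^{\prime}<d/d^{\prime}$), whereas the paper assumes the conclusion fails and derives a contradiction with strict concavity.
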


\begin{proof}
We first establish the result for changes in $x_{l}$ while keeping
$x_{r}$ fixed. Let $x_{r}>t_{l}$ and $x_{l}\in\varphi_{t_{l}}\left(x_{r}\right).$
From Lemma \ref{G1} we know that $t_{l}\leq x_{l}<x_{r}$. If $x_{l}^{\prime}\geq x_{r}$
the result follows immediately since 
\[
P\left(x_{l}^{\prime},x_{r}\right)x_{l}^{\prime}+\left(1-P\left(x_{l}^{\prime},x_{r}\right)\right)x_{r}>P\left(x_{l},x_{r}\right)x_{l}+\left(1-P\left(x_{l},x_{r}\right)\right)x_{r}
\]
regardless of the value of $P\left(x_{l}^{\prime},x_{r}\right)$ and
$P\left(x_{l},x_{r}\right)$, which are always positive.

Now let $x_{l}^{\prime}\in\left(x_{l},x_{r}\right).$ We want to show
that $\pi\left(x_{l}^{\prime},x_{r}\right)>\pi\left(x_{l},x_{r}\right).$

Assume not, that is, assume that 
\[
P\left(x_{l}^{\prime},x_{r}\right)x_{l}^{\prime}+\left(1-P\left(x_{l}^{\prime},x_{r}\right)\right)x_{r}\leq P\left(x_{l},x_{r}\right)x_{l}+\left(1-P\left(x_{l},x_{r}\right)\right)x_{r}.
\]
Cancelling and rearranging terms yields 
\[
\frac{P\left(x_{l}^{\prime},x_{r}\right)}{P\left(x_{l},x_{r}\right)}\geq\frac{x_{r}-x_{l}}{x_{r}-x_{l}^{\prime}}.
\]
Since $x_{l}\in\varphi_{t_{l}}\left(x_{r}\right)$, it follows that
\[
P\left(x_{l},x_{r}\right)u\left(x_{l},t_{l}\right)+\left(1-P\left(x_{l},x_{r}\right)\right)u\left(x_{r},t_{l}\right)\geq
\]
\[
P\left(x_{l}^{\prime},x_{r}\right)u\left(x_{l}^{\prime},t_{l}\right)+\left(1-P\left(x_{l}^{\prime},x_{r}\right)\right)u\left(x_{r},t_{l}\right).
\]
We obtain that 
\[
\frac{P\left(x_{l}^{\prime},x_{r}\right)}{P\left(x_{l},x_{r}\right)}\leq\frac{u\left(x_{r},t_{l}\right)-u\left(x_{l},t_{l}\right)}{u\left(x_{r},t_{l}\right)-u\left(x_{l}^{\prime},t_{l}\right)}.
\]
Putting these expressions together yields 
\[
\frac{u\left(x_{r},t_{l}\right)-u\left(x_{l},t_{l}\right)}{u\left(x_{r},t_{l}\right)-u\left(x_{l}^{\prime},t_{l}\right)}\geq\frac{x_{r}-x_{l}}{x_{r}-x_{l}^{\prime}},
\]
which is equivalent to 
\begin{equation}
u\left(x_{l}^{\prime},t_{l}\right)\leq\frac{x_{r}-x_{l}^{\prime}}{x_{r}-x_{l}}u\left(x_{l},t_{l}\right)+\left(1-\frac{x_{r}-x_{l}^{\prime}}{x_{r}-x_{l}}\right)u\left(x_{r},t_{l}\right).\label{eq:convexo}
\end{equation}
But $\frac{x_{r}-x_{l}^{\prime}}{x_{r}-x_{l}}x_{l}+\left(1-\frac{x_{r}-x_{l}^{\prime}}{x_{r}-x_{l}}\right)x_{r}=x_{l}^{\prime}$
and $\frac{x_{r}-x_{l}^{\prime}}{x_{r}-x_{l}}\in\left(0,1\right)$.
Therefore, strict concavity requires that 
\begin{equation}
u\left(x_{l}^{\prime},t_{l}\right)>\frac{x_{r}-x_{l}^{\prime}}{x_{r}-x_{l}}u\left(x_{l},t_{l}\right)+\left(1-\frac{x_{r}-x_{l}^{\prime}}{x_{r}-x_{l}}\right)u\left(x_{r},t_{l}\right).\label{eq: concavo}
\end{equation}
The contradiction between equations (\ref{eq:convexo}) and (\ref{eq: concavo})
establishes the first result.

Now we establish the result for changes in $x_{r}$ while keeping
$x_{l}$ fixed.

Let $x_{l}<t_{r}$, $x_{r}\in\varphi_{t_{r}}\left(x_{l}\right)$ and
$x_{r}^{\prime}<x_{r}$. From Lemma \ref{G1} we know that $x_{l}<x_{r}\leq t_{r}$.
If $x_{r}^{\prime}\leq x_{l}$ the result follows immediately since
\[P\left(x_{l},x_{r}^{\prime}\right)x_{l}+\left(1-P\left(x_{l},x_{r}^{\prime}\right)\right)x_{r}^{\prime}<P\left(x_{l},x_{r}\right)x_{l}+\left(1-P\left(x_{l},x_{r}\right)\right)x_{r}
\]
regardless of the value of the probabilities $P\left(x_{l},x_{r}^{\prime}\right)$
and $P\left(x_{l},x_{r}\right)$, which are always positive.

Now let $x_{r}^{\prime}\in\left(x_{l},x_{r}\right).$ We want to show
that $\pi\left(x_{l},x_{r}^{\prime}\right)<\pi\left(x_{l},x_{r}\right).$

Assume not, that is, assume that 
\[
P\left(x_{l},x_{r}^{\prime}\right)x_{l}+\left(1-P\left(x_{l},x_{r}^{\prime}\right)\right)x_{r}^{\prime}\geq P\left(x_{l},x_{r}\right)x_{l}+\left(1-P\left(x_{l},x_{r}\right)\right)x_{r}.
\]
Cancelling and rearranging terms yields 
\[
\frac{1-P\left(x_{l},x_{r}^{\prime}\right)}{1-P\left(x_{l},x_{r}\right)}\geq\frac{x_{r}-x_{l}}{x_{r}^{\prime}-x_{l}}.
\]
Since $x_{r}\in\varphi_{t_{r}}\left(x_{l}\right)$, it follows that
\[
P\left(x_{l},x_{r}\right)u\left(x_{l},t_{r}\right)+\left(1-P\left(x_{l},x_{r}\right)\right)u\left(x_{r},t_{r}\right)\geq
\]
\[
P\left(x_{l},x_{r}^{\prime}\right)u\left(x_{l},t_{l}\right)+\left(1-P\left(x_{l},x_{r}^{\prime}\right)\right)u\left(x_{r}^{\prime},t_{l}\right).
\]
We obtain that 
\[
\frac{1-P\left(x_{l},x_{r}^{\prime}\right)}{1-P\left(x_{l},x_{r}\right)}\leq\frac{u\left(x_{r},t_{r}\right)-u\left(x_{l},t_{r}\right)}{u\left(x_{r}^{\prime},t_{r}\right)-u\left(x_{l},t_{r}\right)}.
\]
Putting these expressions together yields 
\[
\frac{u\left(x_{r},t_{r}\right)-u\left(x_{l},t_{r}\right)}{u\left(x_{r}^{\prime},t_{r}\right)-u\left(x_{l},t_{r}\right)}\geq\frac{x_{r}-x_{l}}{x_{r}^{\prime}-x_{l}},
\]
which is equivalent to 
\begin{equation}
u\left(x_{r}^{\prime},t_{r}\right)\leq\frac{x_{r}^{\prime}-x_{l}}{x_{r}-x_{l}}u\left(x_{r},t_{r}\right)+\left(1-\frac{x_{r}^{\prime}-x_{l}}{x_{r}-x_{l}}\right)u\left(x_{l},t_{r}\right).\label{eq:convexo-1}
\end{equation}
But $\frac{x_{r}^{\prime}-x_{l}}{x_{r}-x_{l}}x_{r}^{\prime}+\left(1-\frac{x_{r}^{\prime}-x_{l}}{x_{r}-x_{l}}\right)x_{l}=x_{r}^{\prime}$
and $\frac{x_{r}^{\prime}-x_{l}}{x_{r}-x_{l}}\in\left(0,1\right)$.
Therefore, strict concavity requires that 
\begin{equation}
u\left(x_{r}^{\prime},t_{r}\right)>\frac{x_{r}^{\prime}-x_{l}}{x_{r}-x_{l}}u\left(x_{r},t_{r}\right)+\left(1-\frac{x_{r}^{\prime}-x_{l}}{x_{r}-x_{l}}\right)u\left(x_{l},t_{r}\right).\label{eq: concavo-1}
\end{equation}
The contradiction between equations (\ref{eq:convexo-1}) and (\ref{eq: concavo-1})
establishes the second result. 
\end{proof}

\end{document}